\numberwithin{mytheorem}{section}
\newtheorem{mylemma}{Lemma}[section]
\numberwithin{mylemma}{section}
\numberwithin{obs}{section}
\numberwithin{clm}{mylemma}
\let\oldnl\nl
\newcommand{\nonl}{\renewcommand{\nl}{\let\nl\oldnl}}
\providecommand{\customgenericname}{}
\newtheorem{preprocessing rule}{Preprocessing Rule}
\newtheorem{reduction rule}{Reduction Rule}
\newtheorem{branching rule}{Branching Rule}
\begin{document}
\title{3-Coloring $C_4$ or $C_3$-free Diameter Two Graphs\thanks{Tereza Klimo\v{s}ov\'{a} is supported by the Center for Foundations of Modern Computer Science (Charles Univ. project UNCE/SCI/004) and by GA\v{C}R grant 22-19073S.}}
%
%
\author{Tereza Klimošová\inst{1} \and
Vibha Sahlot\inst{2}
}
\authorrunning{T. Klimo\v{s}ov\'{a} and V. Sahlot}
%
\institute{Charles University in Prague, Czechia \and 
University of Cologne, Germany\\
\email{tereza@kam.mff.cuni.cz, sahlotvibha@gmail.com}
}
\maketitle              
\begin{abstract}
The question of whether 3-\textsc{Coloring} can be solved in polynomial-time for the diameter two graphs is a well-known open problem in the area of algorithmic graph theory. We study the problem restricted to graph classes that avoid cycles of given lengths as induced subgraphs.  Martin et. al. [CIAC 2021] showed that the problem is polynomial-time solvable for $C_5$-free or $C_6$-free graphs, and, $(C_4,C_s)$-free graphs where $s \in \{3,7,8,9\}$. We extend their result proving that it is polynomial-time solvable for $(C_4,C_s)$-free graphs, for any constant $s$, and for $(C_3,C_7)$-free graphs. Our results also hold for the more general problem List 3-Colouring. 

\keywords{$3$-coloring \and  List $3$-Coloring \and Diameter 2 Graphs \and Induced $C_4$ free Graphs  \and Induced $C_3$ free Graphs.}
\end{abstract}

\section{Introduction}

In graph theory, \textsc{$k$-Coloring} is one of the most extensively studied problems in theoretical computer science. Here, given a graph $G(V,E)$, we ask if there is a function $c: V(G) \rightarrow \{1,2, \ldots k\}$  coloring all the vertices of the graph with $k$ colors such that adjacent vertices get different colors. If such a function exists, then we call graph $G$ {\em $k$-colorable}. The \textsc{$k$-Coloring} is one of Karp's 21 NP-complete problems and is NP-complete for $k \geq 3$ \cite{DBLP:Karp72}. 

The \textsc{$3$-Coloring} is NP-hard even on planar graphs \cite{properColoring}. It motivates to study \textsc{$3$-Coloring} under various graph constraints. For example, lots of research has been done on hereditary classes of graphs, i.e., classes that
are closed under vertex deletion \cite{chudnovsky2006strong, demaine2005algorithmic, golovach2017survey, kratochvil2011can, bonomo2018three}. It has also led to the development of many powerful algorithmic techniques. 

However, many natural classes of graphs are not hereditary, for example, graphs with bounded diameter. These graph classes are not hereditary as the deletion of a vertex may increase the diameter of the graph. The {\em diameter} of a given graph is the maximum distance between any two vertices in the graph. Graphs with bounded diameter are interesting as they come up in lots of real-life scenarios, for example, real-world graphs like Facebook. In this paper, we restrict our attention to \textsc{$3$-Coloring} on graphs with diameter two. We formally define problem $3$-\textsc{Coloring Diameter Two} as follows: Given an undirected diameter two graph  $G$, find if  there exists a  $3$-coloring of $G$.

The structure of the diameter two graphs is not simple, as adding a vertex to any graph $G$ such that it is adjacent to all other vertices, makes the diameter of the graph at most two. Hence, the fact that {$3$-Coloring} is NP-complete for general graph class implies that {$4$-Coloring} is NP-complete for diameter two graphs. 

Mertzios and Spirakis \cite{mertzios2013algorithms} gave a very non-trivial NP-hardness construction proving that  {$3$-Coloring} is NP-complete for the class of graphs with diameter three, even for triangle -free graphs. Furthermore, they presented a subexponential algorithm for $3$-\textsc{Coloring Diameter Two} for $n$-vertex graphs with runtime $2^{\mathcal{O}(\sqrt{n \log n)}}$. D\c ebski et. al. provided a further improved algorithm for $3$-\textsc{Coloring Diameter Two} on $n$-vertex graphs with runtime $2^{\mathcal{O}(n^{\frac{1}{3}} {\log}^2 n)}$.  $3$-\textsc{Coloring Diameter Two} has been posed as an open problem in several papers \cite{bodirsky2012complexity, broersma2013three, martin2019colouring, mertzios2013algorithms, paulusma2016open, debsk2022faster}.

The problem has been studied for various subclasses and  is known to be polynomial-time solvable for:
\begin{itemize}
\item  graphs that have at least one articulation neighborhood \cite{mertzios2013algorithms}.
\item  $(C_3, C_4)$-free graphs \cite{martin2019colouring}.
\item $C_5$-free or $C_6$-free graphs , $(C_4,C_s)$-free graphs where $s \in \{3,7,8,9\}$ \cite{martin2022colouring}.
\item $K^2_{1,r}$-free or $S_{1,2,2}$-free graphs, where $r \geq 1$ \cite{martin2019colouring}.
\end{itemize}

Continuing this line of research, we further investigate $3$-\textsc{Coloring} for $C_4$-free and $C_3$-free diameter 2 graphs. In particular, we consider the following two problems:
\begin{enumerate}
\item $3$-\textsc{Coloring ($C_4,C_{k}$)-Free Diameter Two}, where given an undirected  induced $(C_4, C_{s} )$-free diameter two graph $G$ for constant natural number $s$, we ask if there exists a  $3$-coloring of $G$.
\item $3$-\textsc{Coloring ($C_3,C_{7}$)-Free Diameter Two}, where given an undirected  induced $(C_3, C_{7} )$-free diameter two graph $G$, we ask if there exists a  $3$-coloring of $G$.
\end{enumerate}

In fact, we consider a slightly more general problem of \textsc{List 3-Coloring}. A {\em list assignment} on $G$ is a function $L$ which assigns to every vertex $u\in V(G)$ a list of admissible colours. A list assignment is a {\em list $k$-assignment} if each list is a subset of a given $k$-element set. The problem of \textsc{List 3-Coloring} is then to decide whether there is a coloring $c$ of $G$ that {\em respects} a given list $3$-assignment $L$, that is, for each vertex $u\in V(G)$, $c(u)\in L(u)$. This problem has also been considered in many of the previously mentioned works, in particular the aforementioned results from~\cite{martin2022colouring}, some of which we use as subroutines in our algorithms, hold for~\textsc{List 3-Coloring} as well.

The following two theorems summarize the main results of our paper.

\begin{theorem}\label{thm:1}
The $3$-\textsc{Coloring ($C_4,C_{s}$)-Free Diameter Two} is polynomial-time solvable for any constant $s$.
\end{theorem}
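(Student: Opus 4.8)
The plan is to reduce \textsc{List 3-Coloring} on a $(C_4,C_s)$-free diameter-two graph $G$ to a bounded number of instances of \textsc{List 2-Coloring}, each of which is solvable in polynomial time since it is equivalent to 2-\textsc{Sat}. First I would dispose of the trivial obstruction: if $\omega(G)\ge 4$ then $G$ is not $3$-colorable and we answer \no, so from now on I assume $\omega(G)\le 3$. I would also run the standard list-cleaning preprocessing (repeatedly delete a colour from $L(u)$ whenever an already-determined neighbour forces it out, aborting if a list becomes empty), so that we may assume every list has size $2$ or $3$.

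The engine of the algorithm is the following reduction. Suppose we can exhibit a dominating set $D$ of $G$ whose size is bounded by a constant $f(s)$ depending only on $s$. Then I would branch over all at most $3^{f(s)}$ colourings of $D$ that respect $L$. Fixing one such colouring, every vertex $u\notin D$ has a neighbour in $D$, whose colour is removed from $L(u)$, so $u$ is left with a list of size at most $2$, while the vertices of $D$ have singleton lists. Each branch is thus a \textsc{List 2-Coloring} instance, which I solve in polynomial time through its equivalence with 2-\textsc{Sat} (equivalently, by invoking the list-colouring subroutines of~\cite{martin2022colouring}). The overall running time is $3^{f(s)}\cdot\mathrm{poly}(n)$, polynomial for every constant $s$.

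Everything therefore rests on the structural claim that a $(C_4,C_s)$-free diameter-two graph with $\omega(G)\le 3$ has domination number bounded by a function of $s$; this is the step I expect to be the main obstacle. The ingredients I would use are the local consequences of $C_4$-freeness: fixing a vertex $v$ and writing $M(v)$ for the set of vertices at distance exactly two from $v$, the neighbours of any $u\in M(v)$ inside $N(v)$ must form a clique, since two non-adjacent such neighbours would close an induced $C_4$ through $v$ and $u$; together with $\omega(G)\le 3$ this bounds that clique, so every $u\in M(v)$ has at most two neighbours in $N(v)$, and by the same argument two non-adjacent vertices of $M(v)$ have a clique, hence a bounded, common neighbourhood. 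I would then argue by contradiction: a minimum (hence minimal) dominating set that is large supplies, by irredundancy, many vertices each carrying a private neighbour, and the bounded-overlap together with the diameter-two condition lets me stitch these private neighbours and their dominators into a long induced path.

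The delicate point, and the reason $C_s$-freeness rather than plain $C_4$-freeness is needed, is to turn a long induced path into a forbidden cycle of the \emph{exact} length $s$: I would show that in a diameter-two $C_4$-free graph a sufficiently long induced path can be closed up, through the common neighbours guaranteed by diameter two, into induced cycles whose attainable lengths include the value $s$, contradicting $C_s$-freeness and thereby capping the length of any induced path and hence $f(s)$. Hitting the length $s$ on the nose is the crux; I would handle it by maintaining, as the path grows, a window of achievable closing lengths and showing this window eventually contains $s$. Finally, the cases $s\in\{5,6\}$ are already covered by the $C_5$-free and $C_6$-free results and $s\in\{3,7,8,9\}$ by~\cite{martin2022colouring}, so the uniform argument above only needs to supply the remaining values of $s$, after which Theorem~\ref{thm:1} follows.
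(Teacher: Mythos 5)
Your high-level skeleton --- branch over the colourings of a constant-size set of vertices so that, after propagation, every remaining vertex has a list of size at most two, then solve the resulting \textsc{2-List Coloring} instances in polynomial time --- is indeed the skeleton of the paper's proof. But both load-bearing steps of your argument are missing, and they are exactly where the difficulty lies. First, the structural claim that every $(C_4,C_s)$-free diameter-two graph with $\omega(G)\le 3$ has domination number bounded by $f(s)$ is never proved; you yourself flag it as ``the main obstacle.'' It is also stronger than what is needed, and it is not clear it is true: the paper never bounds the domination number of $G$. It only needs a bounded set of vertices whose colouring kills all lists of size \emph{three}; vertices that already have lists of size two (for instance, all of the neighbourhood $N_1$ of a fixed induced $C_5$, which can be arbitrarily large) never need to be dominated. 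Nothing in your sketch (minimal dominating set, private neighbours, ``stitching'' them into a long induced path) is carried out, and getting a long \emph{induced} path out of many private neighbours is itself nontrivial.

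Second, the step you correctly identify as the crux --- closing a long induced path into an induced cycle of length \emph{exactly} $s$ --- is left as pure hand-waving (``maintaining a window of achievable closing lengths''). Since $(C_4,C_s)$-freeness forbids only induced cycles of lengths $4$ and $s$, producing induced cycles of uncontrolled lengths proves nothing. The paper resolves this with a concrete anchoring device: it fixes a coloured induced $C_5=(1,2,3,4,5,1)$ and grows a path $P_\ell^*=(p_1,\dots,p_\ell)$ inside $N_2$ greedily (colour $p_i$ and its five neighbours in $N_1$, pick a new list-three vertex $p_{i+2}$ and a common neighbour $p_{i+1}$ of $p_i$ and $p_{i+2}$, maintaining the invariant that the $N_1$-neighbourhoods of interior vertices avoid those of the endpoints). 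This invariant guarantees that the path closes through exactly four further vertices, $c_2\in N_{N_1(2)}(p_1)$, the $C_5$-vertices $2$ and $3$, and $c_3\in N_{N_1(3)}(p_\ell)$, into an \emph{induced} cycle of length exactly $\ell+4$; taking $\ell=s-4$ contradicts $C_s$-freeness, so the greedy process must stop after colouring $O(s)$ vertices, yielding $O(3^{6s})$ \textsc{2-List Coloring} instances. Without some such mechanism for hitting the length $s$ on the nose, your argument does not go through.
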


\begin{theorem}\label{thm:2}
The $3$-\textsc{Coloring ($C_3,C_{7}$)-Free Diameter Two} is polynomial-time solvable.
\end{theorem}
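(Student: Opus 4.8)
The plan is to solve the more general \textsc{List $3$-Coloring} problem (ordinary $3$-coloring being the special case where every list equals $\{1,2,3\}$), and to reduce any instance, through polynomially many branchings, to an equivalent family of instances in which every vertex has a list of size at most two. Such instances are solvable in polynomial time, since \textsc{List $3$-Coloring} with all lists of size at most two is equivalent to an instance of $2$-\textsc{Sat}. Throughout I would use the standard propagation rule: whenever a vertex has a singleton list $\{c\}$, delete $c$ from the lists of its neighbours, and reject the current branch once some list becomes empty. I would first dispose of the bounded-degree case: if $\Delta(G)\le d$ for a fixed constant $d$, then the diameter-two condition forces $|V(G)|\le 1+\Delta+\Delta(\Delta-1)\le 1+d^{2}$, so the instance has constant size and is solved by brute force. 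Hence I may assume $G$ contains a vertex $v$ of large degree.

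I would root the computation at such a high-degree vertex $v$ and branch on its colour, giving at most three subinstances; fix $c(v)=1$. Since $G$ is $C_3$-free, $N(v)$ is an independent set, and propagation immediately shrinks every list on $N(v)$ to a subset of $\{2,3\}$, i.e.\ to size at most two. Because $G$ has diameter two, every vertex of $W:=V(G)\setminus N[v]$ has a neighbour in $N(v)$, so $N(v)$ dominates $W$. Consequently the only vertices that may still carry a list of size three lie in $W$, and the whole task reduces to shrinking the lists of these distance-two vertices.

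The heart of the proof, and the step I expect to be the main obstacle, is this reduction of $W$; it is exactly here that $C_7$-freeness enters. Two structural facts are available. First, triangle-freeness implies that any two adjacent vertices of $W$ have \emph{disjoint} neighbourhoods inside $N(v)$, since a common neighbour would close a triangle. Second, $C_7$-freeness forbids any induced path $w_1w_2w_3w_4$ inside $W$ whose endpoints $w_1,w_4$ have private neighbours $u_i\ne u_j$ in $N(v)$, for then $v\,u_i\,w_1\,w_2\,w_3\,w_4\,u_j\,v$ would be an induced $C_7$; more generally it bounds the length of induced paths in $W$ that connect distinct neighbours of $v$. The key claim I would aim to establish is that these two facts confine the interaction between $W$ and the two colour classes $A$ (colour $2$) and $B$ (colour $3$) of $N(v)$ to a structure of bounded complexity, so that after a further polynomially bounded amount of branching — resolving the colour pattern on a constant-size ``interface'' of $N(v)$, equivalently guessing the colours of a constant number of additional vertices — every $w\in W$ either sees both colours among its $N(v)$-neighbours and is forced to colour $1$, or sees only one colour and loses the other, so that in all cases its list drops to size at most two.

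The difficulty, and the reason the argument is more delicate than in the $(C_4,C_s)$-free setting of Theorem~\ref{thm:1}, is that $C_4$, $C_5$ and $C_6$ are all permitted here: distance-two vertices may share arbitrarily many common neighbours, so the ``linear'' neighbourhood structure exploited when $C_4$ is forbidden is unavailable, and $C_7$-freeness alone must do the work of controlling $W$. Once the reduction of $W$ is carried out and every list has size at most two, each branch is solved by the $2$-\textsc{Sat} subroutine; since the number of branches is polynomial, the whole algorithm runs in polynomial time.
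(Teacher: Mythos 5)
There is a genuine gap, and it sits exactly where you flag it yourself: the ``key claim'' that the interaction between $W=V(G)\setminus N[v]$ and $N(v)$ can be confined to a constant-size interface, after which polynomially many branchings shrink every list in $W$ to size at most two, is never proved --- it is only stated as an aim. The two structural facts you do establish are not strong enough to yield it. The triangle-freeness fact is fine, but the $C_7$-freeness fact is much weaker than you suggest: to turn an induced path $w_1w_2w_3w_4$ in $W$ into an induced $C_7$ through $v$, you need neighbours $u_i,u_j\in N(v)$ of $w_1,w_4$ that are non-adjacent to \emph{all} the other path vertices, and since $C_4$, $C_5$ and $C_6$ are allowed, vertices of $W$ can share neighbours in $N(v)$ in arbitrarily complicated ways, so such ``private'' neighbours need not exist and nothing bounds the interface. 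Note also that after colouring the single anchor $v$, propagation reaches only $N(v)$; \emph{every} vertex of $W$ still has a full list, so the unproved claim is carrying the entire weight of the theorem.

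The paper's proof avoids this by choosing a much stronger anchor: it first invokes the known polynomial-time algorithm for $C_5$-free diameter-two graphs, so it may assume an induced $C_5$, which it colours with the pattern $a,b,a,b,c$. This five-vertex, three-colour skeleton is what your single two-colour-classes anchor cannot replicate. It induces a partition of $N_1$ into classes $A_1,A_3,A_{13}$ (neighbours of the $a$-coloured vertices $1,3$), $B_2,B_4,B_{24}$, and $C$ (neighbours of $5$), and $C_7$-freeness then yields genuinely restrictive lemmas: every full-list vertex of $N_2$ must have neighbours in $A_{13}$, in $B_{24}$, \emph{and} in $C$ (Lemmas \ref{lem:C_3ABCList2}--\ref{lem:C_3z3}), and full-list vertices are isolated in $G[N_2]$ (Lemma \ref{lem:C_3z3isolated}). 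With that structure, colouring just two more vertices ($z_1\in L_3$ with $c$, then a remaining full-list $z_2$ with $b$) forces any third full-list vertex $z_3$ to produce an induced $C_7$ --- a contradiction --- so $O(|V|^2)$ branchings each ending in a {\sc 2-List Coloring} instance suffice. If you want to salvage your approach, the missing step is precisely an analogue of these lemmas for a single-vertex anchor, and it is unclear that one exists: with only two colour classes in $N(v)$ there is no analogue of the class $C$, which is what powers the $C_7$ contradictions in the paper.
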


The paper is organized as follows. We define the terminology and notations used in this paper in Section \ref{sec:prelim}. 
We give some preprocessing rules 
 in Section \ref{Sec:preprocessingRule}. Next, we prove that  $3$-\textsc{Coloring ($C_4,C_{k}$)-Free Diameter Two} is polynomial-time solvable in Section \ref{sec:C_4C_k}. Afterward, we prove $3$-\textsc{Coloring $(C_3,C_{7})$-Free Diameter Two} is polynomial-time solvable in Section \ref{sec:C_3C_k}.

 \section{Preliminaries} \label{sec:prelim} 
 In this section, we state the graph theoretic terminology and notation used in this paper. The set of consecutive integers from $1$ to $n$ is denoted by $[n]$. The vertex set and the edge set of a graph $G$ are denoted by $V(G)$ and $E(G)$, respectively (or simply $V$ and $E$ when the underlying graph $G$ is clear from the context). By $|G|$, we denote the order of $G$, that is $\max\{|V(G)|,|E(G)|\}$. An edge between vertices $u$ and $v$ is denoted as $(u,v)$.  
 For an unweighted and undirected graph $G(V,E)$, we define {\em distance} $d(u,v)$ between two vertices $u,v \in V(G)$ to be the length of a shortest path between $u,v$, if $u$ is reachable from $v$, else it is defined as $+ \infty$. The length of a path is defined by the number of edges in the path.
 
 Let $f:A \rightarrow B$ be a function. Then, for any non-empty set $A' \subseteq A$, by $f(A')$, we denote the set $\{f(a)| a \in A'\}$. 

For a vertex $v \in V(G)$, its {\em neighborhood} $N(v)$  is the set of all vertices adjacent to it and its {\em closed neighborhood} $N[v]$ is the set $N(v) \cup \{v\}$. Moreover, for a set $A \subseteq V$, $N_A(v)=N(v) \cap A$, similarly, $N_A[v]=N_A(v) \cup \{v\}$  We define $N_G[S]=N(S)=\bigcup_{v \in S}N_G[v]$ and $N_G[S]=N[S]=N_G[S] \setminus S$ where $S \subseteq V(G)$. The {\em degree} of a vertex $v \in V(G)$, denoted by $deg_G(v)$ or simply $deg(v)$, is the size of $N_G(v)$. 
A complete graph on $q$ vertices is denoted by $K_q$. 

A graph $G'$ is a \emph{subgraph} of $G$ if $V(G')\subseteq V(G)$ and $E(G')\subseteq E(G)$. A graph $G'$ is an \emph{induced subgraph} of $G$ if for all $x,y \in V(G')$ such that $(x,y)\in E(G)$, then $(x,y)\in E(G')$. For further details on graphs, refer 
to~\cite{diestel2000graph}.

We say that list assignment $L$ is a {\em $k$-list assignment} if $|L(v)| \leq k$ for each vertex $v \in V(G)$. 

In {\sc $k$-List Coloring}, given a graph $G$ and a $k$-list assignment $L$, we ask if $G$ has a coloring that respects $L$.

\begin{theorem} \cite{EDWARDS1986337} \label{2ListColPoly}
The {\sc $2$-List Coloring} is linear-time solvable.
\end{theorem}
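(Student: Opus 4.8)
The plan is to reduce \textsc{2-List Coloring} to \textsc{2-SAT}, which is solvable in linear time (Aspvall, Plass, and Tarjan), and then to check that the reduction itself runs in linear time, giving an overall linear-time algorithm.

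First I would dispose of degenerate lists by unit propagation. Any vertex whose list is empty immediately makes the instance a no-instance, so I scan for those first. A vertex $v$ with a singleton list $\{a\}$ is forced to take colour $a$; I then remove $a$ from the list of every neighbour of $v$ and delete $v$. This may create new singletons or empty lists, so propagation is repeated until it stabilises. Using a worklist of newly forced vertices and processing each such vertex once, this takes total time linear in $\sum_{v}(1+\deg(v)) = O(|V|+|E|)$, since each deletion of a colour from a neighbour's list is charged to the corresponding incident edge. If propagation ever produces an empty list, reject; otherwise every remaining vertex has a list of size exactly two.

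Next I would build a 2-CNF formula. For each remaining vertex $v$ with $L(v)=\{a_v,b_v\}$, introduce one Boolean variable $x_v$, reading $x_v=\text{true}$ as ``colour $v$ with $a_v$'' and $x_v=\text{false}$ as ``colour $v$ with $b_v$''; thus every literal corresponds to assigning $v$ one of its two admissible colours. For each edge $(u,v)$ and each colour $\alpha \in L(u)\cap L(v)$, I add the clause forbidding $u$ and $v$ from both taking $\alpha$, namely the disjunction of the literals ``$u \neq \alpha$'' and ``$v \neq \alpha$'' — a genuine $2$-clause precisely because each list has size two. A truth assignment satisfies the formula if and only if the induced colouring is proper and respects $L$, because the clauses forbid exactly the monochromatic edges and impose no other constraint. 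The formula has $O(|V|)$ variables and $O(|E|)$ clauses and is produced in linear time.

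Finally I would run a linear-time \textsc{2-SAT} solver on this formula and translate a satisfying assignment, if one exists, back into a colouring in the obvious way. The total running time is $O(|V|+|E|)$. The only step requiring genuine care is the propagation phase: the main obstacle is implementing it so that singleton and empty lists are resolved in linear rather than quadratic time, which is exactly what the worklist-and-charging argument above secures; everything else is a routine structural correspondence between colourings and satisfying assignments.
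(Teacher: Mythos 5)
Your proposal is correct: the paper does not prove this statement itself but cites Edwards (1986), and the argument there is exactly the one you give — eliminate singleton/empty lists by propagation and encode the remaining size-two lists as a \textsc{2-SAT} instance, solvable in linear time. Both the correspondence between satisfying assignments and proper list colourings and your worklist analysis of the propagation phase are sound, so there is nothing to add.
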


Next, we have the following proposition that we may use without explicitly referring to these in the rest of our paper.

\begin{proposition}\label{lemma:1Obs}
 In $3$-\textsc{Coloring Diameter Two}, if the given diameter two graph $G$ has a vertex that has its neighborhood colored with at most three colors, then the instance is polynomial-time solvable in $|V(G)|$.
\end{proposition}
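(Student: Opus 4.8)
The plan is to reduce the rest of the instance to \textsc{$2$-List Coloring} and invoke Theorem~\ref{2ListColPoly}. Let $v$ be the vertex whose neighbourhood is coloured, and fix the given proper colouring $c$ of $N[v]$ using the three available colours (if $v$ itself is not yet coloured we first assign it any colour from $L(v)$ avoiding the colours already used on $N(v)$, writing $L(w)=\{1,2,3\}$ in the plain $3$-colouring setting; if no such colour exists, the partial colouring does not extend and we reject this branch). Since $G$ has diameter two, every vertex $w\in V(G)\setminus N[v]$ satisfies $d(v,w)=2$, so $w$ has at least one neighbour $u\in N(v)$, and $u$ is already coloured by $c$. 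Hence $c(u)$ is forbidden for $w$, and the set of colours still admissible for $w$ — namely $L(w)$ with the colours of its already-coloured neighbours removed — has size at most two.

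We therefore build a list assignment $L'$ on the induced subgraph $H=G[V(G)\setminus N[v]]$ by setting $L'(w)=L(w)\setminus\{c(u): u\in N(w)\cap N[v]\}$ for every $w\in V(H)$. By the previous paragraph $|L'(w)|\le 2$ for all such $w$, so $(H,L')$ is a \textsc{$2$-List Coloring} instance, which Theorem~\ref{2ListColPoly} solves in linear time. The colouring $c$ of $N[v]$ extends to a proper colouring of $G$ respecting the lists if and only if $(H,L')$ admits a colouring respecting $L'$: any extension restricts to such a colouring, and conversely combining $c$ with a solution of $(H,L')$ yields a proper colouring, because $V(G)$ is partitioned into $N[v]$ and $V(H)$, so the only edges not internal to one of these parts run between them and are exactly the ones whose forbidden colours were removed when forming $L'$. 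Running the $2$-list routine for each colouring of the neighbourhood that must be considered then gives a decision procedure; since in the applications this is a single fixed colouring, or a family polynomially bounded in $|V(G)|$, the total running time is polynomial in $|V(G)|$.

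The step I expect to be the crux is verifying that \emph{every} vertex outside $N[v]$ receives at least one forbidden colour, since this is what forces the list sizes down to two; this is precisely where the diameter-two hypothesis is essential, guaranteeing $d(v,w)\le 2$ and hence a coloured neighbour in $N(v)$ for each such $w$. Minor points to handle with care are vertices $w$ whose list $L'(w)$ collapses to size one (a forced colour) or to size zero (immediate rejection of the current branch), and ensuring the colour chosen for $v$ is consistent with the precoloured neighbours; none of these affect the polynomial bound, as they are absorbed either into the $2$-list routine of Theorem~\ref{2ListColPoly} or into the constant-time choice for $v$.
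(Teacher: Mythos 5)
Your proposal is correct and follows essentially the same route as the paper: use the diameter-two hypothesis to conclude that every vertex outside $N[v]$ has a coloured neighbour in $N(v)$, so after removing forbidden colours all remaining lists have size at most two, and the resulting \textsc{$2$-List Coloring} instance is solved by Theorem~\ref{2ListColPoly}. The paper's proof is just a terser version of yours (it colours $v$ when $N(v)$ uses at most two colours, rejects otherwise, and notes $N(N(v))\cup N(v)=V(G)$), so the extra care you take with the restricted instance $(H,L')$ and the equivalence argument only spells out what the paper leaves implicit.
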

\begin{proof}
Suppose $v \in V(G)$ such that its neighborhood is completely colored. If $N(v)$ is colored with at most two colors, then we can assign one of the remaining color to $v$, else there is no valid $3$-coloring with the given color assignment of the neighborhood of $v$. Now, as $G$ has diameter two, $N(N(v)) \cup N(v) = V(G)$. Hence, we have \textsc{$2$-List Coloring} instance which can be solved in polynomial-time by Theorem \ref{2ListColPoly}.
\qed
\end{proof} 

Similarly, we can assume that the given graph does not contain a vertex with constant degree (or degree such that bruteforcing the assignment of the colors on its neighborhood does not exceed the time complexity we are aiming for).

\section{Prepoccessing Rules for $3$-\textsc{Coloring Diameter Two}} \label{Sec:preprocessingRule}

A {\em preprocessing rule} is a rule which we apply to the given instance to produce another instance or an answer YES or NO. It is said to be {\em safe} if it applying it to the given instance produces an equivalent instance. We say that a preprocessing rule is {\em applicable} on an instance if the output is different from the input instance. Now we list the preprocessing rules that we will use in later sections. 

Consider a diameter two graph $G$ with a list $3$-assignment such that each vertex $v \in V(G)$ is assigned a list (a set) $L(v)$ of colors from the set $\{a,b,c\}$. When $|L(v)|=1$ for some vertex $v$, we say that $v$ is {\em colored} and let $c(v)$ be the only element of $L(v)$.

\begin{preprocessing rule}\label{rr:propagate}
If $|L(u)|=1$, for every neighbor $v$ of $u$, let $L(v):=L(v)\setminus L(u)$.
\end{preprocessing rule}

\begin{preprocessing rule}\label{rr:emptyList}
If $L(v)= \emptyset$ for any vertex $v \in V(G)$, then $G$ is not list $k$-colorable.
\end{preprocessing rule} 

\begin{preprocessing rule}\label{rr:List2Col}
If $0<|L(v)| \leq 2$ for all vertices $v \in V(G)$, then {\sc $2$-List Coloring} is linear-time solvable in $|G|$ (by Theorem \ref{2ListColPoly}).
\end{preprocessing rule}

We call $K_4$ minus an edge a \emph{diamond}.

\begin{preprocessing rule}\label{rr:diamondCol}
If $G$ contains a diamond  $\{v,w,x,y\}$ such that it does not contain the edge $(v,x)$, let $L(x)=L(v):=L(v) \cap L(x)$.
\end{preprocessing rule} 

\begin{preprocessing rule}\label{rr:triangleCol}
If $G$ contains a triangle  $\{v,w,x\}$ such that $|L(v)|=2$ and  $L(v)=L(w)$, let $L(x)=L(x) \setminus L(v)$.
\end{preprocessing rule} 

\begin{preprocessing rule}\label{rr:4Cycle}
If $G$ contains an induced $C_4$  $\{v,w,x,y\}$ such that $|L(v)|=|L(w)|=|L(x)|=2$ and $L(v), L(w), L(x)$ are pairwise different, $L(y):=L(y)\setminus (L(v)\cap L(x))$.
\end{preprocessing rule}

\begin{proposition}
The {\it Preprocessing Rules} \ref{rr:propagate}, \ref{rr:emptyList}, \ref{rr:List2Col}, \ref{rr:diamondCol}, \ref{rr:triangleCol} and \ref{rr:4Cycle} are safe.
\end{proposition}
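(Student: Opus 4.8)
The plan is to verify each of the six preprocessing rules separately, in each case arguing that the modified list assignment admits a coloring respecting the lists if and only if the original one does. Since every rule only shrinks lists (or declares NO), one direction is trivial: any coloring respecting the smaller lists also respects the original larger lists. So for each rule the real content is the forward direction, namely that no valid coloring is destroyed by the shrinking. I would state this general observation once at the start to avoid repeating it.

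First I would dispatch the easy rules. For \textbf{Rule \ref{rr:propagate}}, if $u$ is colored with $c(u)$, then any proper coloring must give each neighbor $v$ a color different from $c(u)$, so removing $c(u)$ from $L(v)$ removes only colors that could never be used; this is safe. \textbf{Rule \ref{rr:emptyList}} is immediate: an empty list means some vertex has no admissible color, so no respecting coloring exists. \textbf{Rule \ref{rr:List2Col}} is not really a shrinking rule but an appeal to Theorem~\ref{2ListColPoly}, and its correctness is just that theorem. For \textbf{Rule \ref{rr:triangleCol}}, the triangle $\{v,w,x\}$ with $L(v)=L(w)$ of size two forces $c(v)$ and $c(w)$ to be the two distinct colors of that list in any proper coloring (they are adjacent and have the same two-element list), so $x$, adjacent to both, cannot receive either; hence deleting $L(v)$ from $L(x)$ removes only infeasible colors.

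The geometric rules require a little more care but are still local. For \textbf{Rule \ref{rr:diamondCol}}, in the diamond $\{v,w,x,y\}$ missing edge $(v,x)$, both $v$ and $x$ are adjacent to both $w$ and $y$, and $w,y$ are adjacent to each other. So $w$ and $y$ use two distinct colors, and any color available to one of $v,x$ but not appearing in both lists can be argued away: I would show that in any proper coloring $v$ and $x$ must in fact receive the same color. Indeed $v$ (and likewise $x$) must avoid both $c(w)$ and $c(y)$, which are two of the three colors, forcing $c(v)$ and $c(x)$ to be the unique remaining third color; hence $c(v)=c(x)\in L(v)\cap L(x)$, justifying the replacement of both lists by $L(v)\cap L(x)$. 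For \textbf{Rule \ref{rr:4Cycle}}, the induced $C_4$ $\{v,w,x,y\}$ (with $v,x$ nonadjacent and $y$ adjacent to both $v$ and $x$) has three of its vertices carrying pairwise distinct two-element lists; I would observe that $v$ and $x$ are the two nonadjacent vertices, and that because their two-element lists differ and each excludes the color shared between them appropriately, one checks that $L(v)\cap L(x)$ is a single color that $v$ and $x$ are both forced toward in a way that would conflict at $y$. The clean statement is: in any respecting coloring, $c(v)$ and $c(x)$ together cannot leave room for $y$ to take the color in $L(v)\cap L(x)$, so removing it from $L(y)$ is safe.

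The step I expect to be the main obstacle is the exact case analysis for \textbf{Rule \ref{rr:4Cycle}}, since here neither forced endpoint is colored outright and the argument hinges on the precise interplay of the three distinct two-element lists among $v,w,x$ together with the nonadjacency pattern of the induced $C_4$. I would pin down which of $\{v,w,x\}$ play the role of the two nonadjacent vertices, enumerate the possible assignments of the three distinct lists (each a two-subset of $\{a,b,c\}$), and check in each case that the color $L(v)\cap L(x)$ genuinely cannot be used on $y$ without creating a conflict along the $C_4$. The other rules are local forcing arguments that I expect to close in one or two sentences each.
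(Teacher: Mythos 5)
Your proposal is correct and takes essentially the same approach as the paper: the first four rules are dispatched by local forcing arguments, the diamond rule by observing that $v$ and $x$ must both avoid the two distinct colors of $w$ and $y$ and hence receive the same third color, and the $C_4$ rule by a case analysis over the three pairwise distinct two-element lists. The case analysis you defer as the "main obstacle" is exactly the paper's two-line argument: whichever of its two colors $w$ receives, it forces one of $v,x$ to the unique color in $L(v)\cap L(x)$ (not both of them, as your looser phrasing suggests), and since $y$ is adjacent to both $v$ and $x$, it can never receive that color.
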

\begin{proof}
The Preprocessing Rules \ref{rr:propagate}, \ref{rr:emptyList}, \ref{rr:List2Col}, and \ref{rr:triangleCol} are easy to see. 

Safety of Preprocessing Rule~\ref{rr:diamondCol} follows from the fact
that any 3-colouring assigns $v$ and $x$ the same colour. Similarly, Safety of Preprocessing Rule~\ref{rr:triangleCol} follows from the fact that any 3-colouring assigns $x$ a color not in $L(v)$, since both colors in $L(v)=L(w)$ are used to color $v$ and $w$.

Now we consider Preprocessing Rule \ref{rr:4Cycle}. 
Without loss of generality, assume $L(v)=\{b,c\}, L(w)=\{a,c\}, L(x)=\{a,b\}$. If $w$ is colored  $a$, then $x$ will be colored $b$. Else if $w$ is colored $c$, then $v$ is colored $b$. Thus, one of $v,x$ is always colored $b$. Hence, $y$ cannot be colored $b$. Thus proved.

\end{proof}

\section{Polynomial-time algorithm for $3$-\textsc{Coloring ($C_4,C_{k}$)-Free Diameter Two}} \label{sec:C_4C_k} 

In this section, we prove that $3$-\textsc{Coloring ($C_4,C_{s}$)-Free Diameter Two}, for any constant $s$, has a polynomial-time algorithm. We reinstate the theorem.

\paragraph{Theorem~\ref{thm:1}.}
{\em $3$-\textsc{Coloring ($C_4,C_{s}$)-Free Diameter Two} is polynomial-time solvable for any constant $s$.}
\newline

Consider graph $G$ with a list 3-assignment $L$, where $L(v)=\{a,b,c\}$ for all $v \in V(G)$ initially (notice that in case of {\sc List $3$-Coloring}, we can initialise with any given list 3-assignment $L$).  

We may assume that $G$ contains an induced $C_5$, otherwise, we can solve the problem in polynomial-time as $3$-\textsc{Coloring} on $C_5$-free diameter two graphs is polynomial-time solvable \cite{martin2022colouring}. 
Consider a $C_5$ as $C_5^1=(1,2,3,4,5,1)$ in $G$. Note that all colorings of $C_5$ are equivalent up to renaming and cyclic ordering of the colors. 
Without loss of generality, assume $c(1)=a$, $c(2)=b$, $c(3)=a$, $c(4)=b$ and $c(5)=c$.

Let the open neighborhood of vertices in $C_5^1$, that is, $N(C_5^1)=N_1$ and the remaining vertices except for the vertices in $C_5^1$ are $N(N_1) \setminus C_5^1=N_2$. 
As $G$ has diameter $2$, hence $V-(C_5^1 \cup N_1 \cup N_2)=\emptyset $. 
Let $Col_1$ be the set of  vertices in $N_1$ that has list size one (i.e. their color is directly determined by the coloring of $C_5^1$ or during the algorithm). Similarly, $Col_2$ are the vertices of $N_2$ that have list size one.

As vertices of $C_5^1$ are already colored, the size of the list for the vertices in $N_1$ is at most two. Consider $N_1(i)=N(i) \setminus (C_5^1 \cup Col_1)$ for all $i \in [5]$. For example, $N_1(1)$ is the open neighborhood of $1$ except for the neighbors in $C_5^1$ and $Col_1$. Let $L_2 \subseteq N_2$ be the set of vertices with list size two and $L_3 \subseteq N_2$ be the set of vertices with list size three.

Throughout the algorithm (or in lemmas below) we assume that the neighborhood of any vertex is not fully colored, else by Proposition \ref{lemma:1Obs}, we can solve the problem in polynomial-time.

In the lemmas below, we assume that $G$ is a $(C_4, C_s)$-free diameter two graph with a list $3$-assignment $L$ on which Preprocessing rules have been exhaustively applied.

\begin{mylemma} \label{lem:7MoreConnectedComponents} 
If there are at most $k \in \mathbb{N}$ connected components in some $N_1(i)$ for $i \in [5]$, then list $3$-coloring can be resolved by solving at most $2^k$ instances of \sc{$2$-List Coloring}.
\end{mylemma}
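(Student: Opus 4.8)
The plan is to branch over all valid colorings of the set $N_1(i)$ that has at most $k$ components, and then to show that once $N_1(i)$ is fully coloured the whole remaining instance collapses to a single instance of \textsc{$2$-List Coloring}, which is solvable in linear time by Theorem~\ref{2ListColPoly}. Since each of the $\le k$ connected components of $G[N_1(i)]$ will admit at most two colorings, this branching produces at most $2^k$ instances, matching the bound in the statement; we answer YES if and only if at least one of these instances is colorable.

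First I would pin down the lists on $N_1(i)$. Every vertex of $N_1(i)$ is a neighbour of the coloured vertex $i$, so Preprocessing Rule~\ref{rr:propagate} removes $c(i)$ from its list; as such a vertex does not lie in $Col_1$, its list has size exactly two and hence equals the fixed two-element set $\{a,b,c\}\setminus\{c(i)\}$. Thus in the induced subgraph $G[N_1(i)]$ every vertex carries the \emph{same} two-colour list, so a valid coloring of it is precisely a proper $2$-colouring using these two colours. A connected graph admits at most two such colourings (none if it contains an odd cycle, and otherwise the two obtained by swapping the colours), each determined by the colour of a single vertex. Consequently, colouring $N_1(i)$ amounts to choosing one of at most two colourings independently in each of its $\le k$ components, giving at most $2^k$ candidate colourings to enumerate (a component with no valid colouring simply makes the instance infeasible).

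The heart of the argument, and the step I expect to be the main obstacle, is to verify that fixing such a colouring of $N_1(i)$ and propagating via Preprocessing Rule~\ref{rr:propagate} leaves \emph{no} vertex with a list of size three. The vertices with list size three all lie in $L_3 \subseteq N_2$, so it suffices to show each $u \in N_2$ acquires a coloured neighbour. By definition $u$ is adjacent to no vertex of $C_5^1$, in particular $u$ is non-adjacent to $i$, so since $G$ has diameter two there is a common neighbour $w$ of $i$ and $u$; as $u$ is adjacent to no cycle vertex we have $w \notin C_5^1$, whence $w \in N(i)\setminus C_5^1 = N_1(i)\cup(Col_1\cap N(i))$. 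In either case $w$ is coloured once $N_1(i)$ is fixed and $c(w)\in\{a,b,c\}\setminus\{c(i)\}$, so the edge $uw$ deletes this colour from $L(u)$, shrinking it to size at most two. Hence after propagation every vertex of $N_2$ has a list of size at most two; the same already holds for the vertices of $N_1$ (each a neighbour of a coloured cycle vertex) and for $C_5^1$ itself, so the entire graph is now a \textsc{$2$-List Coloring} instance. Solving it by Theorem~\ref{2ListColPoly} for each of the $\le 2^k$ candidate colourings proves the lemma, and it is precisely in this covering step that both the diameter-two hypothesis and the particular choice of $N_1(i)$ are used to guarantee that branching on a single neighbourhood eliminates all size-three lists.
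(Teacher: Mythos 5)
Your proposal is correct and follows essentially the same route as the paper: branch on the (at most two) proper colourings of each of the $\le k$ components of $N_1(i)$, then use the diameter-two property to conclude that once $N(i)$ is fully coloured every remaining vertex has a list of size at most two. The only cosmetic difference is that the paper invokes Proposition~\ref{lemma:1Obs} for the final step, whereas you re-derive its diameter-two covering argument inline (and you justify the per-component bound via the observation that all of $N_1(i)$ shares the list $\{a,b,c\}\setminus\{c(i)\}$ rather than via the paper's odd-cycle/bipartiteness phrasing, which amounts to the same thing).
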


\begin{proof}
Without loss of generality, assume that $i=1$. Notice that for any valid $3$-coloring of graph $G$, each connected component in $N_1(1)$ should be a bipartite graph. For contradiction, assume that there is an odd cycle in a connected component of $N_1(1)$. It requires at least three colors to color any odd cycle. But all the vertices in the odd cycle are adjacent to $1$. Hence, we require a fourth color to color $1$. This is a contradiction for any valid list $3$-coloring of $G$.

Now for each connected component in $N_1(1)$, arbitrarily choose a vertex and consider both possibilities of colors in its list. Propagate the coloring to the rest of the vertices in that connected component. As there are at most $k$ connected components in $N_1(1)$ by the assumption, we have $2^k$ possibilities of coloring all the vertices in $N_1(1)$. By Proposition \ref{lemma:1Obs}, we are left with  \textsc{List 2-Coloring} instance for each of the possible color assignments to $N(1)$. Thus, list $3$-coloring can be resolved by solving at most $2^k$ instances of \sc{$2$-List Coloring}.
\qed
\end{proof} 

\begin{mylemma}\label{lemma:1N1ListSize2AdjOneC5Vertex}
Each vertex in $N_1$ that has a list of size two is adjacent exactly to one vertex in $C_5^1$.
\end{mylemma}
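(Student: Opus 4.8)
The plan is to prove the statement by showing that a vertex $v \in N_1$ with $|L(v)| = 2$ can be adjacent to \emph{at most} one vertex of $C_5^1$; since membership in $N_1$ already forces at least one such adjacency, equality follows. I would set $A := N(v) \cap C_5^1$, assume for contradiction that $|A| \ge 2$, and derive a contradiction by analysing the possible shapes of $A$ inside the $5$-cycle.

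The first step is to argue that $A$ must be an independent set of $C_5^1$. Two cycle-consecutive vertices receive distinct colors under the fixed proper coloring $c(1)=a$, $c(2)=b$, $c(3)=a$, $c(4)=b$, $c(5)=c$, so if $A$ contained two such vertices then Preprocessing Rule~\ref{rr:propagate} (applied exhaustively) would remove two distinct colors from $L(v)$, forcing $|L(v)| \le 1$ and contradicting $|L(v)| = 2$. As the independence number of $C_5$ is two, this leaves exactly the five distance-two pairs as candidates for $A$.

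Next I would discard the pairs that are bichromatic. Among $\{1,3\}$, $\{2,4\}$, $\{3,5\}$, $\{4,1\}$, $\{5,2\}$, the last three consist of vertices of two different colors, so the same propagation argument again removes two colors and contradicts $|L(v)| = 2$. Only the monochromatic pairs $A = \{1,3\}$ (both colored $a$) and $A = \{2,4\}$ (both colored $b$) survive the list-size analysis, and these are precisely the cases that list sizes alone cannot rule out.

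The crux, and the sole place where $C_4$-freeness enters, is to eliminate these two remaining pairs. For $A = \{1,3\}$, vertex $v$ is adjacent to $1$ and $3$ but not to their common cycle-neighbor $2$ (since $2 \notin A$); as the $C_5$ is induced, $1$ and $3$ are non-adjacent, so $v,1,2,3$ induce a $C_4$, contradicting that $G$ is $C_4$-free. The case $A = \{2,4\}$ is symmetric via the common neighbor $3$. I expect this final case analysis to be the main obstacle, in the sense that one must first carry out the list-size reductions carefully enough to isolate exactly the monochromatic distance-two pairs; once that is done, the induced $4$-cycle through the shared cycle-neighbor closes the argument immediately, yielding $|A| = 1$.
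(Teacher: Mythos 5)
Your proof is correct. The overall skeleton coincides with the paper's: adjacency to two differently colored vertices of $C_5^1$ shrinks $L(v)$ to size at most one, so the only cases requiring real work are the monochromatic distance-two pairs $\{1,3\}$ and $\{2,4\}$, which are eliminated using $C_4$-freeness. Where you genuinely diverge is in how that crux case is closed. The paper argues that the 4-cycle $(v,i,i+1,i+2)$ must have a chord (else it would be an induced $C_4$), so these four vertices form a $K_4$ (not list $3$-colorable) or a diamond, and then invokes Preprocessing Rule~\ref{rr:diamondCol} to force $v \in Col_1$, a contradiction. You instead observe that no chord can exist at all: $v \not\sim i+1$ because $A = N(v)\cap C_5^1$ is exactly the monochromatic pair (and indeed $i+1$ carries a different color, so such an adjacency would already have shrunk the list), and $i \not\sim i+2$ because $C_5^1$ is induced --- hence $v,i,i+1,i+2$ induce a $C_4$, contradicting $C_4$-freeness directly. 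Your finish is more elementary, needing no preprocessing rule, but it leans on the inducedness of $C_5^1$ (which the paper's setup does guarantee); the paper's version is agnostic to inducedness, since it also absorbs the $(i,i+2)$ chord into the $K_4$ case. Both arguments are sound, and your systematic enumeration of $A$ as an independent set of $C_5$, split into monochromatic and bichromatic pairs, makes the case analysis airtight.
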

\begin{proof}
Consider a vertex $v$ in $N_1 \setminus Col_1$, thus $|L(v)|=2$. Then there are two possibilities. Either it is adjacent to more than one differently colored vertices in $C_5^1$  or it is adjacent to more than one same colored vertices in $C_5^1$, say $i$ and $i+2$ for $i \in \{1,2\}$.  The first case implies that $|L(v)|=1$, which is a contradiction. In the second case, as $G$ is $C_4$-free, $i$, $i+1$, $i+2$ and $v$ either form a $K_4$ (which implies $G$ is not list $3$-colorable) or a diamond, where $i+1$ is a common neighbor of $i$ and $i+2$ in $C_5^1$. But by the Preprocessing Rule \ref{rr:diamondCol}, the diamond will imply that $v \in Col_1$, which is a contradiction. Hence, the second case is not possible and $v$ is adjacent to exactly one vertex in $C_5^1$.
\qed
\end{proof} 

\begin{mylemma} \label{lem:2list2NonAdjacent} 
There are no edges between vertices in $N_1(i)$ and in $N_1(i+1)$ for all $i \in [4]$ and between $N_1(1)$ and $N_1(5)$.
\end{mylemma}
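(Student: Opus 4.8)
The plan is to argue by contradiction. Suppose there is an edge between some $u \in N_1(i)$ and $v \in N_1(i+1)$, where I treat the indices cyclically so that the special pair $N_1(1), N_1(5)$ is covered by the same argument (since $1$ and $5$ are adjacent on $C_5^1$). I would then exhibit a forbidden induced $C_4$ and contradict the $C_4$-freeness of $G$.

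First I would record the consequences of the memberships $u \in N_1(i)$ and $v \in N_1(i+1)$. By definition $N_1(i) \subseteq N_1 \setminus Col_1$, so each of $u$ and $v$ has list size exactly two, and hence Lemma~\ref{lemma:1N1ListSize2AdjOneC5Vertex} applies to both. Since $u$ is adjacent to $i$ and to exactly one vertex of $C_5^1$, that vertex must be $i$; in particular $u$ is not adjacent to $i+1$. Symmetrically, $v$ is adjacent to $i+1$ and to no other vertex of $C_5^1$, so $v$ is not adjacent to $i$. Because $u$ and $v$ are attached to distinct vertices of $C_5^1$, they are distinct from each other and from both $i$ and $i+1$.

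Next I would assemble the four vertices $i, u, v, i+1$ into a $4$-cycle. The edges $(i,u)$, $(u,v)$, $(v,i+1)$ are present (the middle one by the assumed edge, the outer two by membership in $N_1(i)$ and $N_1(i+1)$), and the edge $(i,i+1)$ is present because $i$ and $i+1$ are consecutive on $C_5^1$. The two remaining pairs, $(i,v)$ and $(u,i+1)$, are non-edges by the previous paragraph. Hence $\{i,u,v,i+1\}$ induces a $C_4$ in $G$, contradicting the hypothesis that $G$ is $C_4$-free, which rules out any such edge and proves the lemma.

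The argument is short, and I do not anticipate a serious obstacle; the only point requiring care is verifying that the $4$-cycle is genuinely \emph{induced}, i.e.\ that neither diagonal is an edge. This is exactly what Lemma~\ref{lemma:1N1ListSize2AdjOneC5Vertex} supplies, since it pins down that each of $u$ and $v$ meets $C_5^1$ in a single, specified vertex, which forces both diagonals to be absent.
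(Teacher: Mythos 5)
Your proof is correct and takes essentially the same route as the paper: both arguments hinge on the $4$-cycle $(i,u,v,i+1,i)$ created by the hypothetical edge $(u,v)$ and derive a contradiction with $C_4$-freeness. The only (cosmetic) difference is that you rule out both chords up front via Lemma~\ref{lemma:1N1ListSize2AdjOneC5Vertex}, exhibiting an induced $C_4$ directly, whereas the paper lets $C_4$-freeness force a chord and then contradicts $u,v\notin Col_1$ by the list reduction that the chord triggers (Preprocessing Rule~\ref{rr:diamondCol}) --- two phrasings of one and the same argument.
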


\begin{proof}
Consider a vertex $v \in N_1(1)$ and a vertex $u \in N_1(2)$. Let $(u,v) \in E(G)$. As $G$ is $C_4$-free, therefore, the cycle $(1,v,u,2,1)$ has an edge, $(1,u)$ or $(2,v)$. This will reduce the size of the list of $u$ or $v$, respectively, to one by the Preprocessing Rule \ref{rr:diamondCol}. But this is a contradiction to the fact that $u,v \notin Col_1$. Similar arguments work for the remaining cases. 
\qed
\end{proof} 

\begin{mylemma} \label{lem:3atmostOneNeighbor} 
Every vertex in $N_1(i)$ has at most one neighbor in $N_1(j)$ for all $i,j \in [5]$ and $i \neq j$.
\end{mylemma}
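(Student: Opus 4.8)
The plan is to argue by contradiction using the $C_4$-free property together with the already-established structural lemmas. Suppose some vertex $v \in N_1(i)$ has two distinct neighbors $u_1, u_2 \in N_1(j)$ with $i \neq j$. First I would recall from Lemma~\ref{lemma:1N1ListSize2AdjOneC5Vertex} that every vertex in $N_1(i)$ is adjacent to exactly one vertex of $C_5^1$, namely $i$, and every vertex in $N_1(j)$ is adjacent only to $j$ in $C_5^1$. So $v$ is adjacent to $i$ but not to $j$, while $u_1, u_2$ are adjacent to $j$ but not to $i$.

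The key step is then to locate a forbidden induced $C_4$. Consider the four vertices $v, u_1, j, u_2$. We have edges $(v,u_1)$, $(v,u_2)$ (by assumption), $(j,u_1)$, and $(j,u_2)$ (since $u_1,u_2 \in N_1(j)$). The remaining pairs are $(v,j)$ and $(u_1,u_2)$. We know $(v,j) \notin E(G)$ because $v$'s only neighbor in $C_5^1$ is $i \neq j$. Thus if $(u_1,u_2) \notin E(G)$ as well, then $v, u_1, j, u_2$ form an induced $C_4$, contradicting $C_4$-freeness. Hence I would first handle the case $(u_1,u_2) \notin E(G)$ by this direct $C_4$ argument, and then deal with the remaining case where $(u_1,u_2) \in E(G)$.

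The case $(u_1,u_2) \in E(G)$ is where I expect the main obstacle to lie. Here $\{v, u_1, u_2, j\}$ with the missing edge $(v,j)$ forms a diamond (a $K_4$ minus the edge $vj$): the edges $(v,u_1), (v,u_2), (u_1,u_2), (j,u_1), (j,u_2)$ are all present. I would then invoke Preprocessing Rule~\ref{rr:diamondCol} with the non-adjacent pair $v, j$, which forces $L(v) = L(v) \cap L(j)$. Since $j$ is a colored vertex of $C_5^1$ with $|L(j)| = 1$, this collapses $L(v)$ to size at most one, placing $v$ in $Col_1$ and contradicting $v \in N_1(i) = N(i)\setminus(C_5^1 \cup Col_1)$. (One should note that if instead the whole set $\{v,u_1,u_2,j\}$ induced a $K_4$, then $G$ would not be list $3$-colorable, so that degenerate subcase is already excluded by our running assumption on colorability.)

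Finally I would remark that the argument is symmetric in the roles of $i$ and $j$ and does not depend on the particular pair chosen, so it applies uniformly to all $i,j \in [5]$ with $i \neq j$; the preprocessing rules being applied exhaustively is what guarantees the diamond contradiction actually fires. The essential ingredients are therefore just $C_4$-freeness (for the first case) and Preprocessing Rule~\ref{rr:diamondCol} together with the fact that $C_5^1$-vertices are colored (for the second case), so no new machinery beyond the stated lemmas and rules is needed.
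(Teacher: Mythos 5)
Your proof is correct and follows essentially the same route as the paper's: form the 4-cycle $(v,u_1,j,u_2)$, use Lemma~\ref{lemma:1N1ListSize2AdjOneC5Vertex} to rule out the chord $(v,j)$, conclude the chord must be $(u_1,u_2)$, and then apply Preprocessing Rule~\ref{rr:diamondCol} to collapse $L(v)$, contradicting $v\notin Col_1$. The only cosmetic differences are that the paper first invokes Lemma~\ref{lem:2list2NonAdjacent} to restrict attention to $j\in\{3,4\}$ (which your uniform argument makes unnecessary), and your parenthetical about a possible $K_4$ is vacuous, since you have already established $(v,j)\notin E(G)$.
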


\begin{proof}
Suppose not. Assume a vertex $v \in N_1(1)$ is adjacent to two vertices $x,y \in N_1(j)$ where $j$ can only be $3,4$ from the Lemma \ref{lem:2list2NonAdjacent}. As $G$ is $C_4$ free, the cycle $(v,x,j,y,v)$ should have a chord. By Lemma \ref{lemma:1N1ListSize2AdjOneC5Vertex}, $(v,j) \notin E(G)$. Thus, $(x,y) \in E(G)$, which is a contradiction as it implies $c(v)=c(j)$ by the Preprocessing Rule \ref{rr:diamondCol}, but $v \notin Col_1$. Similar arguments hold for the remaining cases.
\qed
\end{proof}

\begin{mylemma} \label{lem:4perfectMatching} 
We have $|N_1(1)|=|N_1(3)|$ and $|N_1(2)|=|N_1(4)|$. Also, $G[N_1(1),$ $N_1(3)]$, $G[N_1(2),N_1(4)]$ are perfect matchings.
\end{mylemma}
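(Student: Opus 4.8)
The plan is to reduce the whole statement to a single existence claim: \emph{every vertex of $N_1(1)$ has at least one neighbour in $N_1(3)$, and vice versa} (and the analogous claim for the pair $N_1(2),N_1(4)$). Indeed, Lemma~\ref{lem:3atmostOneNeighbor} already guarantees that a vertex of $N_1(1)$ has \emph{at most} one neighbour in $N_1(3)$; combining ``at most one'' with ``at least one'' in both directions forces $G[N_1(1),N_1(3)]$ to be a perfect matching, and a perfect matching between two sets immediately yields $|N_1(1)|=|N_1(3)|$. So after this reduction only the ``at least one neighbour'' step remains.

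For that step I would use the diameter-two hypothesis together with colour propagation. Fix $v\in N_1(1)$. Since $v$ is adjacent to vertex $1$ (colour $a$) and, by Lemma~\ref{lemma:1N1ListSize2AdjOneC5Vertex}, to no other vertex of $C_5^1$, we have $L(v)=\{b,c\}$, and in particular $v\not\sim 3$. As $G$ has diameter two, there is a common neighbour $w$ of $v$ and $3$. I would then argue $w\in N_1(3)$ by elimination. First, $w\notin C_5^1$: the only neighbours of $3$ in $C_5^1$ are $2$ and $4$, and either choice would make $v$ adjacent to a second vertex of $C_5^1$, contradicting Lemma~\ref{lemma:1N1ListSize2AdjOneC5Vertex}. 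Second, $w$ is not coloured: if $|L(w)|=1$, then since $w\sim 3$ we have $c(w)\neq a$, so $c(w)\in\{b,c\}=L(v)$; but then exhaustive application of Preprocessing Rule~\ref{rr:propagate} to the edge $(w,v)$ would have deleted $c(w)$ from $L(v)$, leaving $|L(v)|\le 1$ and contradicting $v\in N_1(1)$. Since $w$ is an uncoloured neighbour of $3$ outside $C_5^1$, it lies in $N_1(3)$, as desired.

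The argument is symmetric in $1$ and $3$, so every vertex of $N_1(3)$ likewise has a neighbour in $N_1(1)$, completing the perfect matching and giving $|N_1(1)|=|N_1(3)|$. For the second pair I would repeat the same reasoning with vertices $2$ and $4$ in place of $1$ and $3$: both are coloured $b$, so $N_1(2)$ and $N_1(4)$ both carry the list $\{a,c\}$, and the common-neighbour/propagation argument goes through verbatim to yield a perfect matching and $|N_1(2)|=|N_1(4)|$.

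The delicate point, and the step I expect to need the most care, is ruling out a \emph{coloured} common neighbour $w$. This works precisely because the two endpoints of each pair receive the \emph{same} colour: $c(1)=c(3)=a$ and $c(2)=c(4)=b$, so the two colours available to any neighbour of $3$ (resp.\ $4$) are exactly the two colours of $L(v)$, forcing the contradiction. For the other distance-two pairs in $C_5^1$, such as $\{3,5\}$ or $\{1,4\}$, the endpoints have different colours, one of the two possible colours of a coloured common neighbour can fall outside $L(v)$, and the propagation no longer forces anything; this is exactly why the lemma claims matchings only for the two equally-coloured pairs $\{1,3\}$ and $\{2,4\}$. I would make sure the write-up isolates this colour coincidence explicitly rather than leaving it implicit.
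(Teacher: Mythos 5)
Your proof is correct and follows essentially the same route as the paper's: combine the at-most-one-neighbour bound of Lemma~\ref{lem:3atmostOneNeighbor} with a diameter-two common-neighbour argument to obtain at least one neighbour, using Lemma~\ref{lemma:1N1ListSize2AdjOneC5Vertex} to exclude common neighbours inside $C_5^1$. The only difference is that the paper's write-up silently assumes the common neighbour of $v$ and $3$ lies in $N_1(3)$ rather than in $Col_1$, whereas you explicitly rule out a coloured common neighbour via Preprocessing Rule~\ref{rr:propagate} (possible precisely because $c(1)=c(3)$ and $c(2)=c(4)$), which is a worthwhile detail to make explicit.
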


\begin{proof}
Consider a vertex $v \in N_1(1)$ and a vertex $u \in N_1(3)$. As, $v$ is not adjacent to $2$ or $4$, using Lemma \ref{lemma:1N1ListSize2AdjOneC5Vertex}, $v$ should be adjacent to some neighbor of $3$ in $N_1$ to keep the distance between $v$ and $3$ as at most two. Also, by Lemma \ref{lem:3atmostOneNeighbor}, it can have at most one neighbor in $N_1(3)$. This holds for all vertices in $N_1(1)$. Hence the graph induced on $N_1(1)$ and $N_1(3)$, i.e., $G[N_1(1),N_1(3)]$ is a perfect matching and $|N_1(1)|=|N_1(3)|$. Another case can be proven similarly.
\qed
\end{proof}

\begin{mylemma} \label{lem:5N2atmostOneNeighL3oneNeigh}
Every vertex in $N_2 \setminus Col_2 $ has at most one neighbor in $N_1(i), \; \forall i \in [5]$ and every vertex in $L_3 $ has exactly one neighbor in $N_1(i), \; \forall i \in [5]$.
\end{mylemma}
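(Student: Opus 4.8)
The plan is to prove the statement as two separate claims: (i) no vertex of $N_2 \setminus Col_2$ has two neighbours inside a single $N_1(i)$, and (ii) every vertex of $L_3$ has at least one neighbour in each $N_1(i)$. Together with the fact that $L_3 \subseteq N_2 \setminus Col_2$ (since a list of size three is not a singleton), these two claims yield the lemma. The main tools are $C_4$-freeness combined with Preprocessing Rule~\ref{rr:diamondCol} for the upper bound, and the diameter-two hypothesis combined with Preprocessing Rule~\ref{rr:propagate} for the lower bound.

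For claim (i) I would argue by contradiction. Suppose $w \in N_2 \setminus Col_2$ has two distinct neighbours $x,y \in N_1(i)$. Since $x,y \in N(i)$ while $w \notin N(C_5^1)$ (so $w$ is non-adjacent to $i$), the four distinct vertices $w,x,i,y$ span a four-cycle whose only possible chords are $(w,i)$ and $(x,y)$. As $w$ is not adjacent to $i$ and $G$ is $C_4$-free, the chord $(x,y)$ must be present. Then $\{i,x,y,w\}$ is a diamond with missing edge $(i,w)$ and connected diagonal $x,y$, so Preprocessing Rule~\ref{rr:diamondCol} forces $c(i)=c(w)$; since $i$ is already coloured, this collapses $L(w)$ to $\{c(i)\}$, contradicting $w \notin Col_2$.

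For claim (ii) the key observation is that a vertex $w \in L_3$ has $L(w)=\{a,b,c\}$, so after exhaustive application of Preprocessing Rule~\ref{rr:propagate} it has \emph{no} coloured neighbour at all, since any neighbour with a singleton list would have removed a colour from $L(w)$. Fix $i \in [5]$. As $i$ is coloured, $w$ is not adjacent to $i$, and since $G$ has diameter two there is a common neighbour $u$ of $w$ and $i$. This $u$ lies in $N(i)$ and is adjacent to $w$, hence $u$ is uncoloured, so $u \notin C_5^1 \cup Col_1$; therefore $u \in N(i)\setminus(C_5^1\cup Col_1)=N_1(i)$. Thus $w$ has at least one neighbour in $N_1(i)$, and combined with claim (i) exactly one.

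The only real obstacle is ensuring, in claim (ii), that the distance-two certificate between $w$ and $i$ is an \emph{uncoloured} vertex, so that it genuinely lands in $N_1(i)$ rather than in $C_5^1$ or $Col_1$; the full-list property of $L_3$-vertices resolves this immediately via Preprocessing Rule~\ref{rr:propagate}. Everything else is a routine combination of the four-cycle/diamond structure with the preprocessing rules, and the argument applies uniformly to every index $i \in [5]$.
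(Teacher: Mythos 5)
Your proposal is correct and follows essentially the same route as the paper's proof: the upper bound via the four-cycle $w,x,i,y$ forced into a diamond by $C_4$-freeness and Preprocessing Rule~\ref{rr:diamondCol}, and the lower bound via diameter two together with the observation that an $L_3$-vertex can have no coloured neighbour after Preprocessing Rule~\ref{rr:propagate}. Your phrasing of the second part (directly showing the common neighbour of $w$ and $i$ must be uncoloured, hence in $N_1(i)$) is the contrapositive of the paper's argument but is the same idea.
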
 

\begin{proof}
Consider a vertex $v \in N_2$ that is adjacent to two vertices $z_1,z_2 \in N_1(1)$. Then $(z_1,z_2) \in E(G)$ as otherwise $(1,z_1,v,z_2,1)$ forms a $C_4$ but $G$ is $C_4$-free. This implies that the color of $v$ is the same as the color of the vertex $1$ by the Preprocessing Rule \ref{rr:diamondCol}, which is a contradiction as $v \notin Col_2$. 
Analogous arguments can be extended for the remaining cases. Hence, any vertex in $N_2 \setminus Col_2$ can be adjacent to at most one neighbor in $N_1(i), \; \forall i \in [5]$.

Suppose $v \in L_3$ and $v$ is not adjacent to any vertex in $N_1(1)$. As the diameter of the graph is two, the distance between $v$ and $1$ is at most two. This implies there is a vertex $y \in Col_1$ such that $(v,y),(y,1) \in E(G)$ by \ref{lemma:1N1ListSize2AdjOneC5Vertex}. But this reduces the list size of $L(v)$ to at most two as now $v$ is adjacent to a colored vertex. This is a contradiction as $|L(v)|$ is three. Hence, $v$ is adjacent to a vertex in $N_1(1)$. We can argue similarly for the remaining cases. Thus every vertex in $L_3 $ has exactly one neighbor in $N_1(i), \; \forall i \in [5]$.
\qed
\end{proof}

\begin{mylemma} \label{lem:6N1_1adjN_3NoCommonNeighbrL2L3}
Any pair of vertices $x \in N_1(1)$ and $y \in N_1(3)$ such that $(x,y) \in E(G)$, don’t share a common neighbour in $L_2$ or $L_3$. Similarly, any pair of vertices $w \in N_1(2)$ and $z \in N_1(4)$ such that $(w,z) \in E(G)$, don’t share a common neighbour in $L_2$ or $L_3$.
\end{mylemma}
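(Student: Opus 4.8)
The plan is to argue by contradiction, exploiting the fact that $x$, $y$ together with any putative common neighbour would form a triangle whose two base vertices carry \emph{identical} two-element lists, which would trigger Preprocessing Rule~\ref{rr:triangleCol}. Since the two assertions of the lemma are symmetric, I would prove only the first (with $x\in N_1(1)$, $y\in N_1(3)$, $(x,y)\in E(G)$); the second, concerning $w\in N_1(2)$ and $z\in N_1(4)$, is obtained by the same reasoning after swapping the roles of the colours $a$ and $b$, using $c(2)=c(4)=b$ in place of $c(1)=c(3)=a$.

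The first step is to pin down the lists of $x$ and $y$. Since $x\in N_1(1)$ we have $x\notin Col_1$, so $|L(x)|=2$, and $x$ is adjacent to $1$; by Lemma~\ref{lemma:1N1ListSize2AdjOneC5Vertex} vertex $1$ is the \emph{only} neighbour of $x$ on $C_5^1$. As $c(1)=a$, Preprocessing Rule~\ref{rr:propagate} forces $a\notin L(x)$, and since $|L(x)|=2$ this yields $L(x)=\{b,c\}$. The identical argument applied to $y\in N_1(3)$, using $c(3)=a$, gives $L(y)=\{b,c\}$.

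Now suppose, for contradiction, that some $v\in L_2\cup L_3$ is a common neighbour of $x$ and $y$. As $(x,y)\in E(G)$ by hypothesis, the set $\{x,y,v\}$ is a triangle with $|L(x)|=2$ and $L(x)=L(y)=\{b,c\}$. Hence Preprocessing Rule~\ref{rr:triangleCol} applies and would replace $L(v)$ by $L(v)\setminus\{b,c\}$. Since $L(v)\subseteq\{a,b,c\}$ and $|L(v)|\ge 2$, the set $L(v)$ meets $\{b,c\}$, so the rule is genuinely applicable and strictly shrinks $L(v)$; indeed $|L(v)\setminus\{b,c\}|\le 1$. This contradicts the standing assumption that the preprocessing rules have been exhaustively applied (equivalently, after the rule $v$ would belong to $Col_2$, contradicting $v\in L_2\cup L_3$).

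The step I expect to require the most care is the list bookkeeping in the general \textsc{List 3-Coloring} setting: one must rule out that $L(x)$ was already a two-element list omitting $c$, which would break the computation $L(x)=\{b,c\}$. This cannot happen precisely because $x\notin Col_1$: had $L(x)$ been $\{a,b\}$ or $\{a,c\}$, propagation of $c(1)=a$ via Preprocessing Rule~\ref{rr:propagate} would have reduced it to a singleton, placing $x$ in $Col_1$. Once this point is settled, Preprocessing Rule~\ref{rr:triangleCol} does all the work, and the two remaining cases, $v\in L_2$ and $v\in L_3$, are handled uniformly.
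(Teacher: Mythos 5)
Your proposal is correct and follows essentially the same route as the paper: both arguments fix $L(x)=L(y)=\{b,c\}$ (since $x,y$ lie in $N_1(1),N_1(3)$ and are uncolored, propagation from the vertices $1,3$ of color $a$ forces this) and then conclude that a common neighbour $v$ of the adjacent pair $x,y$ would be forced to color $a$, contradicting $v\in L_2\cup L_3$ under exhaustive preprocessing. The only difference is that you make the final step explicit via Preprocessing Rule~\ref{rr:triangleCol}, which is precisely the rule the paper's terser wording (``Hence $v$ is colored $a$'') implicitly invokes, so your write-up is a slightly more careful version of the same proof.
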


\begin{proof}
Suppose not and there is a vertex in $ v \in L_2\cup L_3$ that is adjacent to both $x$ and $y$, for any two vertices $x \in N_1(1)$ and $y \in N_1(3)$ such that $(x,y)\in E(G)$. Both $x,y$ have list $\{b,c\}$. Hence $v$ is colored $a$,
 which is a contradiction as $ v \in L_2\cup L_3$. Another case can be proved using similar arguments.
\qed
\end{proof}

\begin{mylemma} \label{lem:8}
Let $z \in L_3$ and $u \in N_1(i)$ for  some $i \in [5]$ such that $uz \notin E(G)$. Then there is at most one vertex $z' \in N_2 \setminus \{z\} $ such that $(z,z'),(u,z') \in E(G)$. 
\end{mylemma}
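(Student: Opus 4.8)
The plan is to argue by contradiction, in the same spirit as Lemmas~\ref{lem:3atmostOneNeighbor} and~\ref{lem:5N2atmostOneNeighL3oneNeigh}: I would assume that there exist two \emph{distinct} vertices $z_1', z_2' \in N_2 \setminus \{z\}$, each adjacent to both $z$ and $u$, and derive a contradiction with the hypothesis $z \in L_3$. First I would record that $z, u, z_1', z_2'$ are four pairwise distinct vertices: $z_1' \neq z_2'$ by assumption, $z_1', z_2' \neq z$ since they lie in $N_2 \setminus \{z\}$, and $z_1', z_2' \neq u$ since $u \in N_1$ while the $z_i'$ lie in $N_2$.

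Next I would examine the closed walk $(z, z_1', u, z_2', z)$, whose four edges $(z,z_1')$, $(z_1',u)$, $(u,z_2')$, $(z_2',z)$ are all present by construction. Since $G$ is $C_4$-free, this $4$-cycle cannot be induced, so it must contain a chord; the only two candidate chords (the diagonals of the cycle) are $(z,u)$ and $(z_1',z_2')$. By hypothesis $uz \notin E(G)$, so the chord must be $(z_1',z_2') \in E(G)$. Consequently the four vertices $\{z, z_1', z_2', u\}$ induce a diamond in which every pair is adjacent except $z$ and $u$.

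Now I would invoke Preprocessing Rule~\ref{rr:diamondCol} on this diamond with non-edge $(z,u)$, which forces $L(z) = L(z) \cap L(u)$ and hence $|L(z)| \le |L(u)|$. Since $u \in N_1(i)$ has $|L(u)| = 2$, this yields $|L(z)| \le 2$, contradicting $z \in L_3$ (i.e.\ $|L(z)| = 3$). Because the preprocessing rules are assumed to have been applied exhaustively, such a diamond cannot survive, so at most one common neighbour $z'$ can exist, as claimed. The only point requiring care — and the one closest to being a genuine obstacle — is confirming that the configuration is a genuine \emph{induced} diamond with the missing edge located precisely at $(z,u)$; but this is exactly what the $C_4$-freeness of $G$ together with the hypothesis $uz \notin E(G)$ guarantees, so there is no further computational difficulty beyond this verification.
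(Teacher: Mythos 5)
Your proof is correct and follows essentially the same route as the paper: assume two common neighbours, use $C_4$-freeness together with $uz \notin E(G)$ to force the chord between them, obtain a diamond with non-edge $(z,u)$, and apply Preprocessing Rule~\ref{rr:diamondCol} to contradict $|L(z)|=3$. The only difference is presentational — you spell out the distinctness of the four vertices and the chord case analysis, which the paper leaves implicit.
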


\begin{proof}
For contradiction, assume that there are at least two common neighbors $z',z'' \in N_2 \setminus \{z\}$ of $u$ and $z$. Then, $(u,z',z,z'',u)$ forms a diamond (there exists the edge $(z',z'')$) as $G$ is $C_4$-free and $uz \notin E(G)$ by assumption). This implies that the size of the list of $z$ is two as the size of list of $u $ is two by the Preprocessing Rule \ref{rr:diamondCol}. This is a contradiction to the assumption that $z \in L_3$.
Hence proved.
\qed
\end{proof}

\begin{mylemma} \label{lem:9}
Either $G[L_2 \cup L_3]$ contains an induced path $P_{\ell}* $ of length $\ell-1$ for some $\ell \in \mathbb{N}$, or whether $G$ is list $3$-colorable can be decided by solving at most $\mathcal{O}(3^{6 \ell})$ {\sc $2$-List Coloring} instances. Here $P_{\ell}* = (p_1,p_2, \ldots p_{\ell})$ is such that the neighborhood of $p_1$ and $p_{\ell}$ in $N_1$ is disjoint from neighborhood of vertices $p_2,p_3 \ldots p_{\ell -1}$. 
\end{mylemma}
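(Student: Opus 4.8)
The plan is to prove the dichotomy by a greedy construction that either outputs the desired path $P_\ell^*$ or else produces a small set of vertices whose colouring forces every remaining list down to size at most two. Throughout I would exploit that, by Lemma~\ref{lem:5N2atmostOneNeighL3oneNeigh}, every vertex of $L_2\cup L_3$ has at most one neighbour in each $N_1(i)$, so its \emph{footprint} $\phi(v):=N_{N_1}(v)$ has size at most five. Assuming $G[L_2\cup L_3]$ contains no induced $P_\ell^*$, I would build an induced path $p_1,p_2,\dots$ in $H:=G[L_2\cup L_3]$ one vertex at a time, maintaining the running union of footprints $F=\bigcup_t \phi(p_t)$ together with the invariant that $\phi(p_1)$ stays disjoint from $\phi(p_2)\cup\dots\cup\phi(p_t)$. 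Each appended vertex ``spends'' at most six vertices of $G$: itself and its at most five footprint vertices; hence a path of length about $\ell$ touches at most $6\ell$ vertices, which is where the exponent in $\mathcal{O}(3^{6\ell})$ comes from.

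At each step, from the current endpoint $p_t$ I would seek a neighbour $q$ in $H$ that (i) is non-adjacent to $p_1,\dots,p_{t-1}$, to keep the path induced, and (ii) has $\phi(q)$ disjoint from $\phi(p_1)$, to preserve the endpoint invariant. If such an extension exists for about $\ell+c$ steps (with $c$ a small constant), then trimming a bounded boundary segment so that the terminal vertex also has a footprint disjoint from the interior yields exactly an induced $P_\ell^*$, placing us in the first case. The delicate point here, and the place where $C_4$-freeness and Lemma~\ref{lem:8} enter, is guaranteeing that \emph{both} endpoints can be made to have \emph{private} footprints, that is, disjoint even from those of their own path-neighbours. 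This is where I expect the main obstacle: adjacent vertices of $H$ may a priori share a neighbour in $N_1$ and form a triangle, so the endpoint invariant does not come for free, and one must use $C_4$-freeness together with the common-neighbour bound of Lemma~\ref{lem:8} to show that such footprint overlaps are local enough to be eliminated by discarding only a constant-length boundary.

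If instead the construction halts before reaching length $\ell$, let $S$ be the set of at most $6\ell$ vertices touched so far, that is, the path vertices together with all their footprints. Being stuck should mean that no vertex reachable from the current endpoint can extend the path while respecting inducedness and the endpoint invariant; I would convert this, using that every vertex of $L_3$ has a neighbour in each $N_1(i)$ (Lemma~\ref{lem:5N2atmostOneNeighL3oneNeigh}) and again the footprint bounds, into the statement that colouring $S$ reduces every still-uncoloured vertex of $L_3$ to a list of size at most two. Guessing a colour for each vertex of $S$ then yields at most $3^{|S|}=\mathcal{O}(3^{6\ell})$ branches, and in each branch all lists have size at most two, so the residual instance is solved by \textsc{$2$-List Coloring} via Theorem~\ref{2ListColPoly} and Preprocessing Rule~\ref{rr:List2Col}. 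The remaining work in this case is to verify rigorously that halting genuinely certifies that $S$ dominates all list-size-three vertices, ruling out long ``escape'' paths that avoid $S$; this is the mirror image of the endpoint-separation difficulty and relies on the same structural ingredients, namely $C_4$-freeness, Lemma~\ref{lem:5N2atmostOneNeighL3oneNeigh}, and Lemma~\ref{lem:8}.
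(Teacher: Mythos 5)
Your high-level dichotomy has the right shape (build a path in $6$-vertex increments, or branch on $\mathcal{O}(3^{6\ell})$ colorings and finish with \textsc{$2$-List Coloring}), but the crucial step --- converting ``the construction is stuck'' into ``a bounded set of colored vertices forces all lists down to size two'' --- is exactly what you leave unproven, and in your purely structural framework it does not follow. In your greedy process, being unable to extend the induced path from $p_t$ while keeping $\phi(q)$ disjoint from $\phi(p_1)$ says nothing about vertices of $L_3$ far from the path: diameter two only puts them at distance two from $S$, whereas your branching argument needs them to be \emph{adjacent} to $S$ (otherwise coloring $S$ does not shrink their lists). You flag this as ``the remaining work,'' but it is the heart of the lemma, and the set that does the job is in general \emph{not} your touched set $S$ at all.

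The paper's proof avoids this by interleaving the construction with coloring and the Preprocessing rules, which is what makes the stuck case tractable. The path is grown two vertices at a time: the next $L_3$-type vertex $x$ is chosen as \emph{any vertex whose current list still has size three} (so it is automatically non-adjacent to every colored vertex --- inducedness and footprint-disjointness come for free), and the connector $y$ between the current endpoint $p_i$ and $x$ exists by diameter two and automatically has list size exactly two. The process can then fail in only two ways: (a) no vertex has list of size three, which is immediately a \textsc{$2$-List Coloring} instance; or (b) every candidate connector $y$ has an $N_1$-neighbor adjacent to $p_1$. In case (b), a diamond argument (Preprocessing Rule~\ref{rr:diamondCol} plus $C_4$-freeness, the paper's Claim~2) shows each of the at most five colored $N_1$-neighbors of $p_1$ is adjacent to at most one $N_2$-neighbor of $p_i$, so \emph{all} remaining list-size-three vertices are adjacent to one of at most five identifiable extra vertices; coloring those (they need not lie in your $S$) yields the \textsc{$2$-List Coloring} instances. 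Without this coloring-driven mechanism --- or some substitute argument of comparable strength --- your stuck-case claim is a genuine gap, and your endpoint ``trimming'' step for restoring private footprints is likewise only asserted, not proved.
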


\begin{proof} 

Pick any vertex $p_1 \in L_3$. Let $j=0$. Repeat the following, until $P_{\ell}* $ is constructed or step~\ref{Step:2} fails. 
In the later case, we claim that whether $G$ is list $3$-colorable can be decided by solving at most $\mathcal{O}(3^{6 \ell})$ {\sc $2$-List Coloring} instances. Note that during the following, we only modify the lists, not the sets $L_3$, $N_1$, etc.

For $i=2j+1$:
\begin{enumerate}
\item \label{Step:1} Color $p_i$ and its five neighbors in $N_1$ and apply Preprocessing rules exhaustively.  
\item \label{Step:2} 
If there are vertices $x\in L_3$ and $y\in N_2$ satisfying the following:
\begin{enumerate}[label=(\roman*)]
\item \label{con:1} $x$ has a list of size three,
\item \label{con:2} $y$ is a common neighbor of $x$ and $p_i$ and $|L(y)|=2$, and 
\item \label{con:3} neighbors of $y$ in $N_1$ are not adjacent to $p_1$, 
\end{enumerate}
set $p_{i+1}=y$, $p_{i+2}=x$, color  $p_{i+1}$ and its (at most five) neighbors in $N_1$, increase $j$ by one, apply Preprocessing rules exhaustively and proceed to the next iteration.
\end{enumerate}

Note that if step~\ref{Step:2} fails because there is no $x$ satisfying~\ref{con:1}, we have a {\sc $2$-List Coloring} instance. If there is such $x$, since $G$ has diameter two, $p_i$ and $x$ have a common neighbor $y$. We next argue that every such $y$ satisfies~\ref{con:2}. As $x$ has a list of size three, it has no colored neighbors and since neighbors of $p_i$ in $N_1$ are colored, it follows that $y\in N_2$ and moreover, $L(y)\geq 2$. On the other hand, $L(y)\leq 2$ as it does not contain $c(p_i)$.

Before discussing the case when step~\ref{Step:2} fails because there is no pair of $x$ and $y$ satisfying~\ref{con:3}, we make a few observations about adjacencies in $G$.

First, observe that from the fact that $p_{2j+1}$, $j\geq 1$, was chosen as a vertex with list of size three, it follows that it is adjacent to none of the already colored vertices, in particular, to none of $p_1,\ldots, p_{2j}$ and their neighbors in $N_1$.

\paragraph{ Claim~1.}
{\em In the above procedure for $p_i$, where $i \neq 1$ and $i$ is odd, every neighbor of $p_1$ is adjacent to at most one neighbor of $p_i$.}

\noindent\textit{Proof.} Suppose there are at least two such common neighbors $s$ and $t$ of $p_1$ and $p_i$. Hence, $(p_1,s,p_i,t,p_1)$ forms a diamond with the edge $(s,t)$ (since  $G$ has no induced $C_4$ and $(p_1,p_i)$ is not an edge). By the Preprocessing Rule \ref{rr:diamondCol}, $L(p_i):=L(p_1)$ which is a contradiction with the fact that $p_i$ has a list of size three after coloring $p_1$ and applying Preprocessing rules.
\qed


\paragraph{ Claim~2.}
{\em In the above procedure for $p_i$, where $i \neq 1$ and $i$ is odd, $p_i$ has at most five neighbors in $N_2$ adjacent to neighbors of $p_1$ in $N_1$.}

\noindent\textit{Proof.} Assume $q \in N_1$ is a neighbor of $p_1$ adjacent to two neighbors $v,w \in N_2$ of $p_i$. Hence, $(q,v,p_i,w,q)$ forms a diamond with the edge $(v,w)$. Again, application of Preprocessing Rule \ref{rr:diamondCol} after coloring $p_1$ implies $|L(p_i)|=|L(p_1)|=1$, contradicting the choice of $p_i$.

\qed 



So if in any iteration step~\ref{Step:2} fails because of~\ref{con:3}, from Claim~2 it follows that all vertices with list of size three are adjacent to one of at most five neighbors of $p_i$ in $N_2$ which are adjacent to neighbors of $p_1$ in $N_1$. Thus, any coloring of these at most five vertices yields an instance of {\sc $2$-List Coloring}.

In total, if the process stops before constructing $P_{\ell}* $, at most $6\ell$ vertices are colored before reaching a {\sc $2$-List Coloring} instance (including vertices colored if step~\ref{Step:2} fails because of~\ref{con:3}). For each such vertex, we have at most three possible choices of color. So, we can decide whether the instance is list $3$-colorable by solving at most $\mathcal{O}(3^{6 \ell})$ {\sc $2$-List Coloring} instances or we construct $P_{\ell}* $.
\qed
\end{proof}

\noindent\textit{Proof for Theorem \ref{thm:1}. } As, $G$ is $(C_4,C_s)$-free, then, we can't have a $P_{\ell}*=(p_1,p_2, \ldots p_{\ell})$ where $\ell =s-4$ in Lemma \ref{lem:9} (i.e. the neighborhood of $p_1$ and $p_{\ell}$ in $N_1$ is disjoint from neighborhood of vertices $p_2,p_3 \ldots p_{\ell -1}$), otherwise, we can construct a $C_s=(p_1,p_2, \ldots p_{\ell},c_3,3,2,c_2,p_1)$, where $c_2 \in N_{N_1(2)}(p_1)$ and $c_3 \in N_{N_1(3)}(p_{\ell})$. Thus, we can decide whether the instance is $3$-colorable by solving at most $\mathcal{O}(3^{6 s})$ {\sc $2$-List Coloring} instances. For a constant $s$, the runtime is polynomial using Theorem \ref{2ListColPoly}.
\qed

\section{Polynomial-time algorithm for  $3$-\textsc{Coloring ($C_3,C_{7}$)-Free Diameter Two} } \label{sec:C_3C_k}

In this section, we prove that $3$-\textsc{Coloring ($C_3,C_{7}$)-Free Diameter Two} has a polynomial-time algorithm. We reinstate the theorem.

\paragraph{Theorem~\ref{thm:2}.}
{\em $3$-\textsc{Coloring ($C_3,C_7$)-Free Diameter Two} is polynomial-time solvable.}
\newline

Consider graph $G$ with a list 3-assignment $L$, where $L(v)=\{a,b,c\}$ for all $v \in V(G)$ initially (notice that in case of {\sc List $3$-Coloring}, we can initialise with any given list 3-assignment $L$). Similar to the previous section, in our algorithm, we try to reduce the size of list of vertices to get {\sc $2$-List Coloring}  instance.

$G$ has a $C_5$, otherwise, we can solve the problem in polynomial-time as $3$-\textsc{Coloring} on $C_5$-free diameter two graphs is polynomial-time solvable \cite{martin2022colouring}. 
Consider a $C_5$ as $C_5^1=(1,2,3,4,5,1)$ in $G$ and assume $c(1)=a$, $c(2)=b$, $c(3)=a$, $c(4)=b$ and $c(5)=c$.

Let the open neighborhood of vertices in $C_5^1$, that is, $N(C_5^1)=N_1$ and the remaining vertices except for the vertices in $C_5^1$ are $N(N_1) \setminus C_5^1=N_2$. 
Let $Col_1$ and $Col_2$ be the set of  vertices in $N_1$ and $N_2$, respectively,  that have list size one.

As the vertices of $C_5^1$ are already colored, the size of the list for the vertices in $N_1$ is at most two. 
Consider $A=(N(1) \cup N(3)) \setminus (C_5^1 \cup Col_1)$, $B=(N(2) \cup N(4)) \setminus (C_5^1 \cup Col_1)$ and $C=N(5)  \setminus (C_5^1 \cup Col_1)$. 
We further partition $A$ into $A_1,A_3$ and $A_{13}$, where the vertices in $A_1$ are adjacent to $1$ but not $3$, the vertices in $A_3$ are adjacent to $3$ but not $1$ and the vertices in $A_{13}$ are adjacent to both $1$ and $3$. 
Similarly, we partition $B$ into $B_2,B_4$ and $B_{24}$, where the vertices in $B_2$ are adjacent to $2$ but not $4$, the vertices in $B_4$ are adjacent to $4$ but not $2$ and the vertices in $B_{24}$ are adjacent to both $2$ and $4$.
We partition $N_2 \setminus Col_2$ into $L_3$ and $L_2$. The set $L_3$ contains the vertices that have list size three and $L_2$ contains the vertices that have list size two.

Throughout the algorithm (or in lemmas below) we assume that the neighborhood of any vertex is not fully colored, else by Proposition \ref{lemma:1Obs}, we can solve the problem in polynomial-time.

\begin{mylemma} \label{lem:C_3ABCList2} 
\begin{enumerate}
\item The vertices in $A$ are not adjacent to $2,4,5$. Similarly, vertices in $B$ are not adjacent to $1,3,5$ and vertices in $C$ are not adjacent to $1,2,3,4$.
\item Each of the vertex $v \in L_3$ has at least one neighbor in each $A,B$ and $C$.
\item The sets $A_1$, $A_3$, $A_{13}$, $B_2$, $B_4$, $B_{24}$ and $C$ are independent. Also, there is no edge between the vertices in $A_1$ and $A_{13}$, $A_3$ and $A_{13}$, $B_2$ and $B_{24}$, $B_2$ and $B_{24}$. 
\end{enumerate}
\end{mylemma}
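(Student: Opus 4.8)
The plan is to handle the three parts independently, relying only on triangle-freeness, diameter two, and exhaustive application of Preprocessing Rule~\ref{rr:propagate}; I do not expect $C_7$-freeness to play any role in this lemma. For Part~1 I would argue purely from list sizes. Every $v\in A$ is adjacent to $1$ or $3$, both colored $a$, so Rule~\ref{rr:propagate} deletes $a$ from $L(v)$; since $v\notin Col_1$ this leaves $L(v)=\{b,c\}$. If $v$ were also adjacent to $2$ or $4$ (colored $b$), then $b$ would be deleted as well and $L(v)=\{c\}$, forcing $v\in Col_1$, a contradiction; adjacency to $5$ (colored $c$) would analogously give $L(v)=\{b\}$. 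Hence $v$ avoids $2,4,5$, and the claims for $B$ and $C$ follow from the identical list-collapse argument with the roles of the colors permuted.

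For Part~2 I would combine diameter two with the observation that a vertex of $L_3$ has a full list and therefore, by Rule~\ref{rr:propagate}, no colored neighbor whatsoever. Fix $v\in L_3$. It is not adjacent to $1$ (which is colored), yet $d(v,1)\le 2$, so $v$ and $1$ share a neighbor $w\in N_1$. This $w$ can be neither in $C_5^1$ nor in $Col_1$, since otherwise $w$ would be a colored neighbor of $v$, contradicting $|L(v)|=3$; thus $w\in N(1)\setminus(C_5^1\cup Col_1)\subseteq A$. Running the same argument against $2$ and against $5$ yields neighbors of $v$ in $B$ and in $C$.

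For Part~3 I would appeal to $C_3$-freeness directly, noting that each set consists of common neighbors of a single $C_5^1$-vertex: $A_1,A_{13}\subseteq N(1)$, $A_3\subseteq N(3)$, $B_2,B_{24}\subseteq N(2)$, $B_4\subseteq N(4)$, and $C\subseteq N(5)$. An edge inside any one of the seven sets, being an edge between two neighbors of a common $C_5^1$-vertex, would close a triangle and is therefore forbidden, giving independence. The cross pairs are handled the same way: $A_1$ and $A_{13}$ both lie in $N(1)$, $A_3$ and $A_{13}$ both in $N(3)$, $B_2$ and $B_{24}$ both in $N(2)$, and $B_4$ and $B_{24}$ both in $N(4)$, so an edge across any such pair again produces a triangle. (I read the repeated ``$B_2$ and $B_{24}$'' in the statement as the pair $B_4$ and $B_{24}$.)

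The only step needing genuine care is Part~2: one must verify that the common neighbor supplied by diameter two is truly uncolored and is not itself a vertex of $C_5^1$, which is precisely where the defining property $|L(v)|=3$ of $L_3$ enters. Parts~1 and~3 reduce to a single clean contradiction each---a list collapsing to size one, respectively a forbidden triangle---and should be routine.
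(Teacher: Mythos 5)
Your proposal is correct and follows essentially the same route as the paper's own proof: Part~1 via the list of a vertex in $A$ (resp.\ $B$, $C$) collapsing below size two, Part~2 via diameter two plus the fact that a vertex with a full list has no colored neighbor, and Part~3 via $C_3$-freeness applied to common neighbors of a single $C_5^1$-vertex. Your write-up is in fact slightly more explicit than the paper's (e.g.\ ruling out that the common neighbor in Part~2 lies in $C_5^1$, and identifying the typo ``$B_2$ and $B_{24}$'' as the pair $B_4$ and $B_{24}$), but the underlying argument is identical.
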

\begin{proof}
Consider the first part of the lemma. As the vertices in $A$ have list of size two, thus they can't be adjacent to $2,4,5$. Similar arguments can be extended for the remaining cases.

Consider the second part of the lemma. Let $v \in L_3$. As the diameter of $G$ is two, there should be a common neighbor of $v$ and $1$ (or $3$) in $N_1$. But as $|L(v)|=3$, it can't be adjacent to any vertex in $Col_1$. Thus $v$ has at least one neighbor in $A$. More precisely, $v$ has at least one neighbor in each $A_1$ and $A_3$, or $v$ has at least one neighbor in $A_{13}$. Similar arguments can be extended for the remaining cases.

Consider the third part of the lemma. As $G$ is $C_3$-free, there cannot be an edge between neighbors of any vertex. Hence, $A_1$, $A_3$, $A_{13}$, $B_2$, $B_4$, $B_{24}$ and $C$ are independent sets. Similarly, the vertices in $A_1$ and $A_{13}$ are adjacent to $1$. Thus, there is no edge between the vertices in $A_1$ and $A_{13}$. Similar arguments can be extended for the rest of the cases.
\qed
\end{proof}

\begin{mylemma} \label{lem:C_3C_71stCase} 
The vertices in $A_3$ and $B_2$ don't have neighbors in $N_2 \setminus Col_2$ that sees $C$. Similarly, any vertex in $N_2 \setminus Col_2$ doesn't have neighbors in both $A_1$ and $B_4$.
\end{mylemma}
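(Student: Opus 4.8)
\textbf{Proof plan for Lemma~\ref{lem:C_3C_71stCase}.}
The plan is to prove both statements by contradiction, in each case exhibiting a forbidden induced cycle of length $3$ or $7$. The key structural facts I would exploit are (i) the adjacencies to the $C_5^1$ vertices forced by the definitions of $A_1,A_3,B_2,B_4,C$ (recorded in Lemma~\ref{lem:C_3ABCList2}(1)), (ii) triangle-freeness, which forbids edges among neighbors of a common vertex, and (iii) the colour/list restrictions that will be used to rule out spurious chords. The main work is a careful case analysis of which chords a candidate induced cycle may or may not have; since $G$ is $C_3$-free, most short chords are immediately excluded, so the real obstacle is certifying that the cycle is \emph{induced} (i.e. that no long chord collapses it to a shorter, allowed cycle) and that the two witnessing $N_2$-vertices are genuinely distinct.

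For the first statement, suppose some $v\in A_3$ has a neighbor $w\in N_2\setminus Col_2$ with $w$ adjacent to some $c\in C$. First I would record the base adjacencies: $v$ is adjacent to $3$ (not to $1,2,4,5$ by Lemma~\ref{lem:C_3ABCList2}(1)), and $c$ is adjacent to $5$ (not to $1,2,3,4$). The idea is to build a cycle through the path $4,3,v,w,c,5$ and close it using the edge $(4,5)$ of $C_5^1$, giving the closed walk $(3,v,w,c,5,4,3)$ of length $6$; to reach length $7$ I would instead route the return through $5,1,2,3$ or augment with a neighbor, so the clean choice is the $7$-cycle $(v,3,2,1,5,c,w,v)$: its consecutive edges are $(v,3)$ (definition of $A_3$), $(3,2),(2,1),(1,5)$ (edges of $C_5^1$), $(5,c)$ (since $c\in C$), $(c,w)$ and $(w,v)$ (by assumption). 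Then I would verify this is induced: chords $(v,1),(v,2),(v,5)$ are excluded by Lemma~\ref{lem:C_3ABCList2}(1); chords among $\{1,2,3,5\}$ are excluded because $C_5^1$ is induced; chords $(c,1),(c,2),(c,3)$ are excluded by part (1); and $(c,v)$, $(w,3)$, $(w,\cdot)$-type chords are ruled out by triangle-freeness (e.g. $(w,3)$ together with $(3,v),(v,w)$ would form a $C_3$) or by the fact that such an edge would shrink the list of $w$ below an admissible size. This produces an induced $C_7$, the contradiction. The case $v\in B_2$ is symmetric, using $(v,2)$ and the same return path.

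For the second statement, suppose $w\in N_2\setminus Col_2$ is adjacent to both some $a\in A_1$ and some $b\in B_4$. Here $a$ is adjacent to $1$ and $b$ to $4$ (and neither to the other $C_5^1$ vertices, by part (1)). The plan is to form the cycle $(a,1,2,3,4,b,w,a)$: edges $(a,1)$, $(1,2),(2,3),(3,4)$ from $C_5^1$, $(4,b)$, $(b,w)$, $(w,a)$. This is again a $C_7$, and I would check it is induced by the same three mechanisms: the $C_5^1$-internal chords vanish because $C_5^1$ is induced; the chords from $a$ and $b$ to the wrong $C_5^1$-vertices vanish by part (1); the chord $(a,b)$ is impossible since both are neighbors of $w$ and $G$ is triangle-free; and any chord incident to $w$ (such as $(w,1)$ or $(w,4)$) is excluded by triangle-freeness against the edges $(w,a),(a,1)$ or $(w,b),(b,4)$.

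The step I expect to be the main obstacle is the induced-ness verification for $w$, specifically ruling out that $w$ is adjacent to an interior $C_5^1$-vertex or to the far endpoint ($c$ in the first case, or simultaneously to $a$ and $b$ with an extra edge in the second). Triangle-freeness handles the one-step chords cleanly, but I would need to be attentive that $w$ is a single vertex playing both roles and that distinctness of the witnesses is maintained; if a naive choice produces a shorter chorded cycle, the fallback is to note that any such chord yields either a triangle (forbidden) or reduces $L(w)$ to size one, contradicting $w\in N_2\setminus Col_2$. Assembling these exclusions is routine but must be done exhaustively for each of the (few) potential chords, which is where the bulk of the bookkeeping lies.
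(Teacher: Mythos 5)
Your cycles are exactly the paper's (up to traversal direction): for the $A_3$/$C$ case you exhibit $(v,3,2,1,5,c,w,v)$, for the $A_1$/$B_4$ case $(a,1,2,3,4,b,w,a)$, and the contradiction with $C_7$-freeness is the same. However, your inducedness verification has a genuine hole at precisely the step you flag as the main obstacle: chords from the $N_2$-vertex $w$ to $C_5^1$. Triangle-freeness only excludes a chord $(w,i)$ when $i$ has a common cycle-neighbor with $w$; in $(a,1,2,3,4,b,w,a)$ this kills $(w,1)$ (via $a$) and $(w,4)$ (via $b$) but says nothing about $(w,2)$ or $(w,3)$, since neither $2$ nor $3$ is adjacent to $a$ or $b$; likewise in the first cycle the chords $(w,1)$ and $(w,2)$ survive your argument. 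Your fallback is also incorrect as stated: adjacency of $w$ to a single colored vertex of $C_5^1$ shrinks $L(w)$ only to size two, which does not contradict $w \in N_2 \setminus Col_2$. The correct (and much simpler) argument, which the paper invokes, is definitional: $N_2$ consists of the vertices at distance exactly two from $C_5^1$, so a vertex of $N_2$ has \emph{no} neighbor on $C_5^1$ whatsoever, and all chords $(w,i)$, $i \in \{1,\dots,5\}$, vanish at once. With that observation your case analysis closes, and the proof coincides with the paper's.

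A smaller inaccuracy: the $B_2$ case is indeed symmetric to the $A_3$ case, but not "using the same return path". With $v \in B_2$ the walk $(v,2,1,5,c,w,v)$ has only six vertices, i.e., it would be a $C_6$, not a $C_7$; you must route the return the other way around the pentagon, giving $(v,2,3,4,5,c,w,v)$, which is the cycle the paper uses. The symmetry is the reflection of $C_5^1$ fixing vertex $5$ (swapping $1\leftrightarrow 4$ and $2\leftrightarrow 3$), and it must be applied to the whole cycle, return path included.
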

\begin{proof}
Suppose there exist vertices $z \in N_2 \setminus Col_2$, $u_b \in B_2$ and $u_c \in C$ such that $z$ is adjacent to both $u_b$ and $u_c$, then there is an induced $C_7$ $(u_b,z,u_c, 5, 4,3,2,u_b)$. To see this, notice that as $u_b$ and $u_c$ are both neighbors of $z$, hence $(u_b,u_c) \notin E(G)$. As per  Lemma \ref{lem:C_3ABCList2}, $u_b$ is not adjacent to $3,4,5$. Similarly, $u_c$ is not adjacent to $2,3,4$. As $z \in L_3$, it is not adjacent to $2,3,4,5$ by construction. This is a contradiction as $G$ is $C_7$-free.

Similarly, if there are vertices $u_a \in A_3$, $u_c' \in C$ and $z' \in N_2 \setminus Col_2$ such that $(z',u_a),(z',u_c') \in E(G)$, then there is an induced $C_7$ $(u_a,z',u_c', 5, 1,2,3,u_a)$. This can be verified using similar arguments as in the previous case. It is a contradiction as $G$ is $C_7$-free.

Likewise, if there are vertices $u_a \in A_1$, $u_b \in B_4$ and $z \in N_2 \setminus Col_2$ such that $z$ is is adjacent to both $u_a$ and $u_b$, then there is an induced $C_7$ $(u_a,z,u_b, 4, 3,2,1,u_a)$ based on similar arguments as in the previous cases. This is a contradiction as $G$ is $C_7$-free. Hence, the claim.
\qed
\end{proof}

\begin{mylemma} \label{lem:C_3z3} 
Any vertex $z \in N_2 \setminus Col_2$ that has a neighbor in $C$, neither sees any vertex in $A_3$, nor in $B_2$. Hence, any vertex $v \in L_3$ has neighbors both in $A_{13}$ and $B_{24}$.
\end{mylemma}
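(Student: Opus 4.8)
The plan is to obtain the first assertion as an immediate contrapositive of Lemma~\ref{lem:C_3C_71stCase}, and then to deduce the ``Hence'' part from it together with the diameter-two hypothesis rather than from any cycle-forbidding argument. For the first assertion I would argue directly: suppose some $z\in N_2\setminus Col_2$ has a neighbour in $C$ and also a neighbour $u\in A_3$ (the case $u\in B_2$ being symmetric). Then $u$ is a vertex of $A_3$ possessing a neighbour $z$ that lies in $N_2\setminus Col_2$ and sees $C$, which is exactly what the first statement of Lemma~\ref{lem:C_3C_71stCase} forbids. Hence no such $z$ exists, and any vertex of $N_2\setminus Col_2$ seeing $C$ can see neither $A_3$ nor $B_2$.

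For the ``Hence'' part, fix $v\in L_3$. By the second part of Lemma~\ref{lem:C_3ABCList2}, $v$ has a neighbour in each of $A$, $B$ and $C$; in particular it sees $C$, so by the first assertion $v$ is adjacent to no vertex of $A_3$ and to no vertex of $B_2$. Since $v$ has list of size three and the Preprocessing Rules (in particular Rule~\ref{rr:propagate}) have been applied exhaustively, $v$ has no coloured neighbour, so in particular $v$ is adjacent to none of $1,\dots,5$. As $G$ has diameter two, $v$ then has a common neighbour $w$ with the vertex $3$. Because $w$ is a neighbour of $v$ it is neither $2$ nor $4$, so $w\in N_1$, and $w$ is uncoloured, so $w\notin Col_1$; together with $w\sim 3$ this places $w$ in $A_3\cup A_{13}$. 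Since $v$ sees no vertex of $A_3$ we conclude $w\in A_{13}$. The identical argument with the vertex $2$ in place of $3$ yields a common neighbour of $v$ and $2$ lying in $B_{24}$. Hence $v$ has neighbours in both $A_{13}$ and $B_{24}$.

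The step that most needs the right idea is the last one. A tempting but unproductive route is to forbid an $A_1$-neighbour of $v$ directly by building an induced $C_7$; however, since vertices of $A_{13}$ and $B_{24}$ are adjacent to two vertices of $C_5^1$, every seven-cycle one assembles this way acquires a chord and degenerates into a $C_5$ or $C_6$, both of which are permitted in a $(C_3,C_7)$-free graph. The decisive move is therefore to invoke the diameter-two hypothesis, which supplies a common neighbour of $v$ with $3$ (respectively $2$) that the first assertion then forces into $A_{13}$ (respectively $B_{24}$). The only points requiring care are checking that this common neighbour genuinely lies in $N_1$ and is uncoloured, both of which follow from $v$ having no coloured neighbour.
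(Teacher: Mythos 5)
Your proof is correct. The only substantive difference from the paper lies in the first assertion: you obtain it as the contrapositive of Lemma~\ref{lem:C_3C_71stCase}, which is legitimate, since the first sentence of that lemma (vertices in $A_3$ and $B_2$ have no neighbors in $N_2\setminus Col_2$ that see $C$) is logically the very same statement. The paper instead re-proves it from scratch by constructing the induced seven-cycles $(z,u_c,5,1,2,3,u_a,z)$ and $(z,u_b,2,3,4,5,u_c,z)$ --- which are exactly the cycles already built in the proof of Lemma~\ref{lem:C_3C_71stCase} --- so your route is more economical and exposes a redundancy in the paper rather than constituting a genuinely different argument. For the ``Hence'' part the two proofs coincide: $v\in L_3$ sees $C$ by Lemma~\ref{lem:C_3ABCList2}, hence sees neither $A_3$ nor $B_2$; since $v$ has no coloured neighbours, diameter two forces a common neighbour with vertex $3$ (respectively $2$), which is then forced into $A_{13}$ (respectively $B_{24}$). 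The paper compresses this into a single sentence, while you spell out the verification that this common neighbour is uncoloured and lies in $N_1$, which is exactly the reasoning the paper leaves implicit; your closing observation that a direct induced-$C_7$ attack on $A_1$-neighbours would degenerate (because vertices of $A_{13}$ and $B_{24}$ have two neighbours on $C_5^1$, creating chords) is also accurate and explains why the diameter-two argument is the right tool here.
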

\begin{proof}
Suppose that $z$ has a neighbor $u_c \in C$. Assume that $u_a \in N_{A_3}(z)$. Then we have an induced  $C_7$ $(z,u_c,5,1,2,3,u_a,z)$. To see this, notice that $(u_a,u_c) \notin E(G)$ as both $u_a$ and $u_c$ are neighbors of $z$ and $G$ is $C_3$-free. As per  Lemma \ref{lem:C_3ABCList2}, $u_a$ is not adjacent to $1,2,5$ and $u_c$ is not adjacent to $1,2,3$. This is a contradiction as $G$ is $C_7$-free. Thus, $z$ does not have any neighor in $A_3$.

Now assume that $u_b \in N_{B_2}(z)$. Then we have an induced $C_7$ $(z,u_b,2,3,4,5,u_c,z)$ based on similar arguments. But it a contradiction as $G$ is $C_7$-free. Thus, $z$ does not have any neighor in $B_2$.

By Lemma \ref{lem:C_3ABCList2}, any vertex $v \in L_3$ has a neighbor in $C$. Hence, neighborhood of $v$ in $A_3$ and $B_2$ is empty. As $G$ has diameter two and $v$ has list size three, $v$ should have neighbor both in $A_{13}$ and $B_{24}$. 
\qed
\end{proof}

\begin{mylemma} \label{lem:C_3z3isolated} 
Vertices in $L_3$ are isolated in $G[N_2]$. 
\end{mylemma}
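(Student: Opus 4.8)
The plan is to argue by contradiction. Suppose some $v\in L_3$ has a neighbour $z\in N_2$. Since $v$ has the full list $\{a,b,c\}$, it has no coloured neighbour, so $z$ is uncoloured and hence $z\in L_2\cup L_3$. Because $G$ is $C_3$-free and $vz\in E(G)$, the vertices $v$ and $z$ have no common neighbour, so their neighbourhoods are disjoint and each of $N(v),N(z)$ is an independent set. From Lemma~\ref{lem:C_3ABCList2} and Lemma~\ref{lem:C_3z3} I may take neighbours $u_a\in A_{13}$, $u_b\in B_{24}$, $u_c\in C$ of $v$ (and, if $z\in L_3$, neighbours $w_a\in A_{13}$, $w_b\in B_{24}$, $w_c\in C$ of $z$ as well).

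The goal is then to exhibit a forbidden induced $C_7$. Since $G$ has diameter two and $z$ is not adjacent to any vertex of $C_5^1$, $z$ reaches every pentagon vertex through a neighbour in $N_1$, and Lemmas~\ref{lem:C_3z3} and~\ref{lem:C_3C_71stCase} restrict where these neighbours may lie (for instance, a vertex of $N_2\setminus Col_2$ with a neighbour in $C$ has none in $A_3\cup B_2$). The cycle I aim for has the shape $(v,\,p,\,c_i,\,c_{i+1},\,c_{i+2},\,w,\,z)$, where $p\in N(v)$ and $w\in N(z)$ are joined to the pentagon vertices $c_i$ and $c_{i+2}$, and $c_i,c_{i+1},c_{i+2}$ are three consecutive vertices of $C_5^1$ (so $c_i,c_{i+2}$ lie at pentagon-distance two); the edge $vz$ closes it. Using that $v,z\notin N(C_5^1)$, that $A,B,C$ miss the ``wrong'' pentagon vertices by Lemma~\ref{lem:C_3ABCList2}(1), and that $N(v),N(z)$ are independent, one verifies that every chord of this $7$-tuple is excluded \emph{except} possibly a single edge $pw$ between the two chosen neighbours.

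The hard part is exactly the elimination of that last chord $pw$. Such an edge cannot be forbidden in isolation: on its own it only produces an admissible shorter cycle (a $C_4$ on $\{v,p,w,z\}$, or a $C_5$/$C_6$ through part of the pentagon), all of which are allowed in a $(C_3,C_7)$-free graph. My route around this is to set up \emph{several} candidate pairs $(p,w)$ — pairing a neighbour of $v$ in $C$, $A_{13}$ or $B_{24}$ with a neighbour of $z$, routed through the matching pentagon-arc — and to argue that the corresponding cross-edges cannot all be present simultaneously, so that at least one candidate is a genuinely induced $C_7$.

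I expect this simultaneous exclusion of cross-edges to be the delicate core of the argument, with the sub-case $z\in L_3$ the most demanding: there both $v$ and $z$ attach to the pentagon only through the ``double'' parts $A_{13},B_{24}$ together with $C$, so the only single-adjacency neighbours available lie in $C$, and all three pentagon-arcs must be examined at once. Here I would lean on the full strength of the independence statements in Lemma~\ref{lem:C_3ABCList2}(3) and on the restriction in Lemma~\ref{lem:C_3C_71stCase} limiting which parts a single vertex of $N_2\setminus Col_2$ can see, to contradict the assumption that every relevant cross-edge occurs; the case $z\in L_2$ should be comparatively more tractable, since a single-adjacency neighbour of $z$ (in $A_1,A_3,B_2$ or $B_4$) removes the double-adjacency chords and lets the $C_7$ be completed directly.
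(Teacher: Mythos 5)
Your setup (contradiction, disjoint independent neighbourhoods, a target $C_7$ of shape $(v,p,c_i,c_{i+1},c_{i+2},w,z)$) matches the paper's opening moves, but the proposal stalls exactly where you say it does, and the route you sketch for getting past that point does not work. The configuration in which \emph{all} the cross-edges $pw$ are present is structurally consistent with $(C_3,C_7)$-freeness, so no amount of juggling candidate pairs can show that ``the cross-edges cannot all be present simultaneously.'' Concretely: let $z'$ be the $N_2$-neighbour of the $L_3$-vertex $z$, let $u_c\in C$ be a neighbour of $z'$, and let $u_a\in A_{13}$, $u_b\in B_{24}$ be neighbours of $z$ (Lemma~\ref{lem:C_3z3}). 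The two potential $C_7$'s $(u_a,3,4,5,u_c,z',z,u_a)$ and $(u_b,2,1,5,u_c,z',z,u_b)$ force the chords $(u_a,u_c)$ and $(u_b,u_c)$ to be edges, and then $(z,u_a,u_c,u_b,z)$ is an induced $C_4$ --- perfectly legal in this graph class, so no structural contradiction arises. The paper's contradiction at this point is not structural but list-theoretic: $u_a,u_b,u_c$ carry pairwise different lists of size two ($\{b,c\}$, $\{a,c\}$, $\{a,b\}$ respectively), so Preprocessing Rule~\ref{rr:4Cycle} shrinks $L(z)$ to size at most two, contradicting $z\in L_3$ after exhaustive preprocessing. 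This use of the $C_4$ rule is the missing idea; without it (or something equivalent) your ``simultaneous exclusion of cross-edges'' has no way to conclude.

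There is a second gap: you assume the $N_2$-neighbour of the $L_3$-vertex reaches each pentagon vertex through a neighbour lying in $A\cup B\cup C$, but a vertex of $N_2\setminus Col_2$ may also reach the pentagon through \emph{coloured} vertices of $N_1$, i.e., through $Col_1$, about which Lemmas~\ref{lem:C_3ABCList2}, \ref{lem:C_3C_71stCase} and \ref{lem:C_3z3} say nothing. The paper needs a separate colouring argument here: once $z'$ is known to have no neighbours in $B_{24}\cup B_4\cup C$, its $Col_1$-neighbours adjacent to $4$ and to $5$ force every coloured neighbour of $z'$ to have colour $a$, whence $z'$ has no $Col_1$-neighbour adjacent to $1$ or $3$; only then does diameter two force neighbours $a_1\in A_1$ and $a_3\in A_3$, yielding the final induced $C_7$ $(z',a_1,1,5,4,3,a_3,z')$ --- which, notably, does not pass through $z$ at all. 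Both of these steps (the list-based $C_4$ contradiction and the $Col_1$ colour argument) are absent from your plan, so as it stands the proposal is an outline of the easy part of the proof with the two genuinely hard steps left open.
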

\begin{proof}
Consider a vertex $z \in L_3$. For contradiction, assume it has a neighbor $z' \in N_2$. Note that $z'\notin Col_2$. 

We first argue that $z'$ has no neighbors in $A_{13}$, $B_{24}$, and $C$. Assume that $z'$ has a neighbor $u_c \in C$. By Lemma \ref{lem:C_3z3}, $z$ has neighbors both in $A_{13}$ and $B_{24}$. Let $u_a \in N_{A_{13}}(z)$ and $u_b \in N_{B_{24}}(z)$. Then, $(u_a,u_c) \in E(G)$, else we have an induced $C_7$ $(u_a,3,4,5,u_c,z',z,u_a)$. Similarly, $(u_b,u_c) \in E(G)$, otherwise we have an induced $C_7$ $(u_b,2,1,5,u_c,z',z,u_b)$.  Consider the 4-cycle $(z,u_a,u_c,u_b,z)$. As, $u_a$,$u_b$ and $u_c$ have all different lists with list size two,by the Preprocessing Rule \ref{rr:4Cycle}, $z$ should have a list of size at most $2$, which contradicts that $z\in L_3$. The cases when $z'$ has a neighbor in $A_{13}$ or $B_{24}$ are analogous.


By Lemma \ref{lem:C_3C_71stCase} $z'$ does not have neighbors both in $A_1$ and $B_4$. Assume $z'$ does not have any neighbor in $B_4$, the other case is analogous.
Since $G$ has diameter two and $z'$ has no neighbor in $B_{24}$, $B_4$ and $C$, it has a neighbor in $Col_1$ adjacent to $5$ and a neighbor in $Col_1$ adjacent to $4$. Observe that since $|L(z')|\geq 2$ and $L(z')$ does not contain colors of colored neighbors of $z'$
, all neighbors of $z'$ in $Col_1$ have the same color, namely the color $a$, as it must be different from $c(4)=b$ and $c(5)=c$. It follows that $z'$ has no neighbor in $Col_1$ adjacent $1$ or $3$, since $c(1)=c(3)=a$.

Thus, since $z'$ has no neighbor in $A_{13}$, it has neighbors in both $A_1$ and $A_3$ (as diameter of $G$ is two). Let $a_1 \in N_{A_1}(z')$ and $a_3 \in N_{A_3}(z')$. Then $G$ contains an induced $C_7$ $(z', a_1,1,5,4,3,a_3,z')$. This is a contradiction as $G$ is $C_7$-free. Thus, any vertex $z \in L_3$ has no neighbor in $N_2$.
\qed
\end{proof}

\noindent\textit{Proof of Theorem \ref{thm:2}.} We argue that coloring any vertex $z_1\in L_3$ by color $c$, applying Preprocessing rules, then coloring any vertex $z_2$, which still has a list of size three (if it exists --- otherwise, we have a {\sc 2-List coloring} instance) by color $b$ and applying Preprocessing rules again, yields a {\sc 2-List coloring} instance. This leads to the following algorithm which requires resolving $O(|V^2|)$ instances of {\sc 2-List coloring} on $G$:
\begin{itemize}
\item resolve the {\sc 2-List coloring} instance  obtained by setting $L(z)=\{a,b\}$ for all $z\in L_3$, if it is a YES-instance, return YES
\item else: for all $z_1\in L_3$:
\begin{itemize}
	\item color $z_1$ by color $c$ and apply the Preprocessing rules exhaustively
	\item if the resulting instance is a {\sc 2-List coloring} instance, resolve it and if it is a YES-instance, return YES
	\item else: 
	\begin{itemize}
		\item resolve the {\sc 2-List coloring} instance obtained by setting $L(z')=\{a,c\}$ for all $z'$ with lists of size three, if it is a YES-instance, return YES
		\item else: for all $z_2$ with list of size three 
		\begin{itemize}
			\item color $z_2$ by $b$ and apply the Preprocessing rules exhaustively
			\item resolve the resulting {\sc 2-List coloring} instance, if it is a YES-instance, return YES
		\end{itemize}
	\end{itemize}
\end{itemize}
\item return NO
\end{itemize}

Note that in the following, the sets of vertices $L_3$, $Col_2$, $A$, $B$, $B_{24}$, etc., are not modified, coloring and application of Preprocessing rules change only the lists of colors available for the vertices.

Consider a vertex $z_1 \in L_3$ and color it $c$, (if no such vertex exists, then $G$ is a {\sc $2$-List Coloring} instance). 

Apply Preprocessing rules and assume that it does not yield a {\sc $2$-List Coloring} instance. Observe that all neighbors of $z_1$ in $A$ and $B$ are colored.  

Consider a vertex $z_2$ with list of size three
. It has a common neighbor $u_c$ with $z_1$ in $C$, as $z_2$ is not adjacent to any other neighbor of $z_1$  and $G$ has diameter two. 
Color $z_2$ by color $b$. Applying Preprocessing rules colors all neighbors of $z_2$ in $C$ and $B$. in particular, $u_c$ is colored $a$.
 
We claim that there is no vertex with list of size three in the resulting instance. For contradiction, assume there is such a vertex $z_3$. It has a common neighbor $u_b$ with $z_2$ in $B$, as $z_3$ is not adjacent to any other neighbor of $z_2$ and $G$ has diameter two. 

By Lemma \ref{lem:C_3z3} $z_3$ has a neighbor $u_a \in A_{13}$. Moreover, $z_3$ has a common neighbor with $z_1$ in $C \setminus \{u_c\}$, say $v_c$, as $z_3$ is not adjacent to any other neighbor of $z_1$, and $G$ has diameter two. 

By Lemma~\ref{lem:C_3z3}, $z_1$ has a neighbor $v_b$ in $B_{24}$. Notice that $(u_a,v_b) \in E(G)$, otherwise, we have an induced $C_7$ $(z_3,u_a,3,4,v_b,z_1,v_c,z_3)$ which is a contradiction as $G$ is $C_7$-free. But now, we have an induced $C_7$ $(z_1,u_c,z_2,u_b,$ $z_3,u_a,v_b,z_1)$ which is a contradiction as $G$ is $C_7$-free. Hence, we do not have such $z_3$. Thus, we must have reduced our given initial instance to some  {\sc $2$-List Coloring} instance (or a polynomial number of instances). Hence, $3$-\textsc{Coloring ($C_3,C_7$)-Free Diameter Two} is polynomial-time solvable.
\qed

\section{Conclusions}

We have proved that $3$-{\sc coloring} on diameter two graphs is polytime solvable for  $(C_4,C_s)$-free  graphs where $s$ is a constant, and  $(C_3,C_7)$-free graphs. In the first case, we give an FPT  on parameter $s$. Further, our algorithms also work for \textsc{List 3-Coloring} on the same graph classes.
This opens avenues for further research on this problem for general $C_4$-free or $C_3$-free graphs. A less ambitious question is to extend similar FPT results to  $(C_3,C_s)$-free  with parameter $s$.

\subsection*{Acknowledgements}
The authors would like to thank Kamak 2022 organised by Charles University for providing a platform for several fruitful discussions. 

\bibliographystyle{splncs04}

\bibliography{citations}

\end{document}